\newtheorem{remark}{Remark}
\newtheorem{theorem}{Theorem}
\newtheorem{assumption}{Assumption}
\newtheorem{problem}{Problem}
\begin{document}
\title{Recovery of Power Flow to Critical Infrastructures using Mode-dependent Droop-based Inverters}
\author{Soham~Chakraborty,~\IEEEmembership{Student Member,~IEEE,}
        Sourav~Patel,~\IEEEmembership{Member,~IEEE,}
        and~Murti V. Salapaka,~\IEEEmembership{Fellow,~IEEE}% <-this % stops a space
\thanks{S. Chakraborty, S. Patel, M. V. Salapaka are with the Department of Electrical and Computer Engineering, University of Minnesota, Minneapolis, 55455 MN, USA~ e-mail: chakr138@umn.edu, patel292@umn.edu, murtis@umn.edu}% <-this % stops a space
% \thanks{Manuscript received April 19, 2005; revised August 26, 2015.}
\thanks{The authors acknowledge Advanced Research Projects Agency-Energy (ARPA-E) for supporting this research through the project titled ``Rapidly Viable Sustained Grid'' via grant no. DE-AR0001016.}
}
\maketitle
\begin{abstract}
Recovery of power flow to critical infrastructures, after grid failure, is a crucial need arising in scenarios that are increasingly becoming more frequent. This article proposes a power transition and recovery strategy by proposing a mode-dependent droop control-based inverters. The control strategy of inverters achieves the following objectives 1) regulate the output active and reactive power by the droop-based inverters to a desired value while operating in on-grid mode 2) seamless transition and recovery of power flow injections into the critical loads in the network by inverters operating in off-grid mode after the main grid fails; 3) require minimal information of grid/network status and conditions for the mode transition of droop control. A framework for assessing the stability of the system and to guide the choice of parameters for controllers is developed using control-oriented modeling. A comprehensive controller hardware-in-the-loop-based real-time simulation study on a test-system based on the realistic electrical network of M-Health Fairview, University of Minnesota Medical Center, corroborates the efficacy of the proposed controller strategy.
\end{abstract}
\begin{IEEEkeywords}
Droop control, emergency power supply system, parallel inverters, small-signal stability, voltage source inverter.
\end{IEEEkeywords}
\section{Introduction}\label{intro}
\IEEEPARstart{C}{ritical} infrastructure (CI) describes essential assets of society that includes medical centers and hospitals, security service centers, food production and distribution centers, communication infrastructures. Disruption of power to CIs often result in a debilitating impact on physical and economic security, public health and safety. Rapid and seamless recovery of power flow, possibly after a power blackout caused by weather/climate disasters, to restore CIs online is a crucial need arising in scenarios that are increasingly becoming more frequent \cite{NOAA}. Accreditation standards such as IEEE $602$ require CIs to have emergency power supply system (EPSS) in order to form a local microgrid network with local generation sets and automatic transfer switches (ATSs), in case of sudden power blackouts of main grid supply \cite{ieee602}. Depending on the level of criticality and urgency of the electrical loads, the EPSS is required to form the local microgrid within permissible time in order to restore the operation of CIs. Various types of EPSS include, Type-$\mathrm{U}$ that designates uninterruptible EPSS and Type-$10$ where the EPSS is allowed $10$~s for recovery \cite{nfpa110}). Gas/diesel generator sets are traditional choices for most of the CIs as local energy units for EPSS due to their sustained and robust power supply capability. However, the long startup time from standby mode makes the task of a seamless and rapid power restoration of CIs difficult to achieve with the aforementioned sources \cite{ieee446}. 
\par Battery storage unit interfaced with power electronic inverters, usually termed as uninterrupted power supply (UPS), is an alternate solution that enhances the ease in operation and reduces the response time of EPSS for CIs. Accreditation standards such as NFPA $111$ recommend stored-energy EPSS (SEPSS) that employ batteries/fuel-cells/ultra-capacitors as main energy harvesting units along with voltage source inverter (VSI) topology to assist in restoration of power to CIs in case of grid failure \cite{nfpa111}. In order to achieve rapid and seamless recovery of power flow for Type-$\mathrm{O}$/Type-$\mathrm{U}$ SEPSS that demand no electrical interruption, VSIs are required to be always synchronized and connected with the power system network of CI irrespective of the availability of main grid, unlike the conventional plug-and-play strategies \cite{csCIu}. However, functionality of remaining synchronized and active with the
power system network adds challenges in operation of VSIs. Here, SEPSS while ensuring that the VSIs are connected to the system, need to guarantee that the VSI do not alter the normal operation of the CI and  should provide no power when the grid is available. Thus in on-grid mode, all of the power to the CI is to be supplied only by the distribution grid, with the VSIs remaining on standby to enable a seamless transition to an off-grid mode in case of distribution grid interruption. In case of a failure of the distribution grid, SEPSS is required to ensure that the CI can function while maintaining a stable voltage and frequency and meeting the power demand by the locally stored energy units via VSIs in off-grid mode. Here, while on-grid it is crucial for SEPSS to maintain sufficient reserves of energy in battery storage units for emergency off-grid operation. 
\par Droop controller-based autonomous and communication-less approach for paralleling multiple battery-fed VSIs in off-grid mode is an effective decentralized strategy \cite{microgrid1}. Reference \cite{chandorkar} introduced the concept of \textit{conventional droop controller} for multiple-UPS-fed off-grid and dominantly inductive microgrid network by emulating the behavior of a synchronous generator in the classical power system. For other network conditions that arise in power distribution systems, several modifications on droop control are proposed \cite{viability,gurrero1,gurrero2,gurrero3,gurrero4,gurrero5,gurrero6,gurrero7}. Some representative references, that emphasize improved power sharing capabilities, are \cite{ghosh,zhong1,zhong2,lampke,bullo,misc1,misc2}. However, all these works are primarily restricted to multi-VSI-based microgrids operating in an off-grid mode. During the on-grid mode, the voltage and frequency of the network are governed by the stiff grid. Here, unlike the off-grid operation, the output active and reactive power of the droop-controlled VSIs will be heavily influenced by the stiff distribution grid where there is a need for the active and reactive power references to be adjusted dynamically \cite{microgrid1,davodi}. Reference \cite{droopgrid1} proposes an adaptive droop control for VSIs suitable for both on- and off-grid mode of operation of microgrid. However, knowledge of magnitude, type of grid impedance and coupling impedances of VSIs (by impedance estimation techniques) are prerequisite for this control which may not be practical in distribution systems where the parameters keep changing. Master-slave-based architecture in multi-VSI systems (electrically closest VSI to grid as master and rests as slaves) is proposed in \cite{droopgrid2,droopgrid3} both for on- and off-grid mode. However, a coordinated architecture suffers from the loss of autonomy and independent nature of operation of multi-VSIs. Reference \cite{droopgrid5} proposes a modified version of droop control law for VSIs in order to achieve operation in on-grid mode; here, the prime focus is to inherit the advantages of the droop controller to limit the inverter current under both normal and faulty grid conditions.
\begin{figure}[t]
	\centering
    \includegraphics[scale=0.26,trim={0cm 0cm 0cm 0cm},clip]{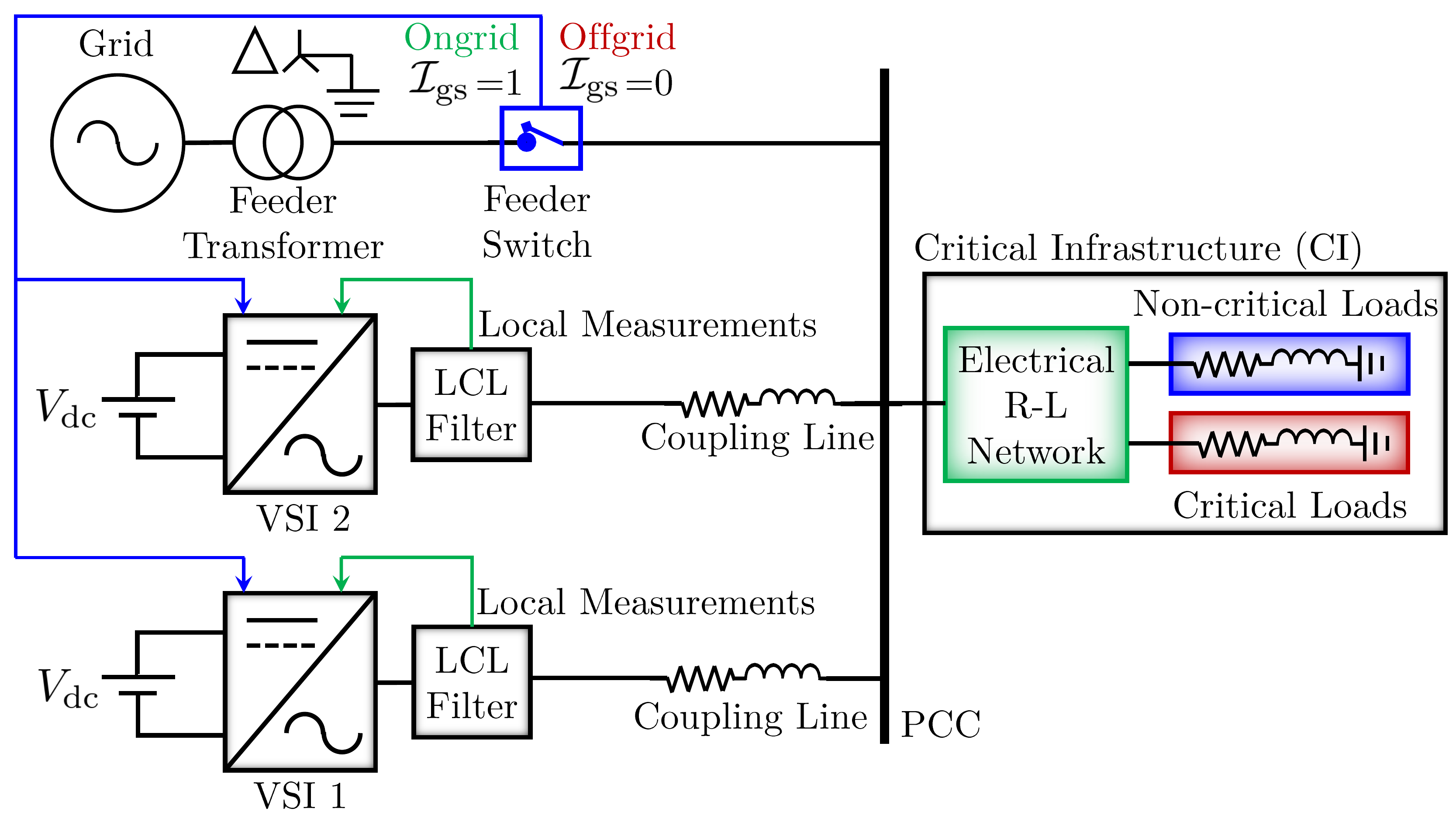}%
	\caption{A critical infrastructure with electrical R-L network and critical/non-critical loads supplied by grid, VSI-$1$ and VSI-$2$ at PCC.}
	\label{fig:system1}
\end{figure}
\par In order to mitigate the above issues, a mechanism for multi-VSI-based systems is proposed that enables both on- and off-grid operation while providing seamless transitions between on-grid and off-grid operations with good regulation of output power. This article develops a novel mode-dependent droop control framework for VSIs with the following features; 1) regulation over supplied output active and reactive power of the VSIs to the desired value (zero reference values in this application) while operating in on-grid mode; 2) a fast response time of recovery of generating sources of SEPSS to the critical loads in the CI by VSIs operating in off-grid mode once the main grid fails; 3) requirement of minimal additional information of grid/network status and conditions for the mode transition of droop control. While the objectives $1$ and $2$ are achieved by the proposed droop control for both on-grid and off-grid modes, objective $3$ is met by requiring only the information (as a single bit $0$ or $1$) on grid availability. Grid availability can be detected employing standard island detection techniques such as the remote island detection techniques. Supervisory remote island detection, as proposed in \cite{island}, is adopted in this work for its fast and accurate performance. With this proposed mode-dependent droop control for SEPSS it is shown, both analytically and experimentally, that VSIs will remain synchronized to the grid and can be regulated to supply no active and reactive power to the network while operating in on-grid mode while the entire load of CI is supplied by the utility grid. Whereas, during off-grid mode the VSIs share the required critical load demand among themselves while exhibiting a seamless transition from on-grid mode after grid failure to act as primary source of generations for the CI microgrid. Moreover, a systematic modeling of the entire system is carried out in order to assess the stability of the system and to guide the selection of the parameters needed for the proposed mode-dependent droop controllers. In order to evaluate the efficacy the proposed approach, controller hardware-in-the-loop-based real-time simulation studies are conducted on a test-system developed based on the realistic electrical network of M-Health Fairview, University of Minnesota Medical Center. Results validate the viability and the performance of the proposed design.
\par This article is organized as follows. In Section \ref{motivation}, the motivation and the description of the system under study are presented. In Section \ref{control}, the proposed control architecture is described for individual VSI. In Section \ref{model}, individual components of the test system under study are modeled which are used for analysis. Section \ref{analysis} is focused on the analysis of system performance with the proposed controller where the objective of the system stability guides the choice of control parameters. Section \ref{result} shows the experimental setup and corresponding results. Finally, Section \ref{conclusion} concludes the article.
\section{Description of System and Motivation}\label{motivation}
The system, considered in this work, comprises a CI as shown in Fig.~\ref{fig:system1}. The considered CI has two classification of loads as defined in NFPA $99$ \cite{nfpa99} namely, i) critical loads: essential load services and facility loads that need continuous uninterrupted power supplied irrespective of the grid availability and ii) non-critical loads: utility services and facilities where temporary shut-down of services is allowed in case of grid failures. In the off-grid mode, the CI unit operates supported by two VSIs as shown in Fig. \ref{fig:system1}. At the emergency bus, which is the point of common coupling (PCC), multiple sources are connected in parallel and supply the load demand of the CI. The grid, interfaced with feeder transformer and feeder switch, is the main power supply in on-grid mode of the CI feeding both critical and non-critical loads. VSI-$1$ and VSI-$2$ are always synchronized and connected to the PCC by coupling lines irrespective of grid availability. Both VSIs have output voltage and current measurements available, while operating with proposed mode-dependent droop controller as described in Section~\ref{control}. In case of a grid failure, SEPSS needs to ensure that supply to critical load demand is met by VSI-$1$ and VSI-$2$ which are the primary sources of CI in off-grid mode. Depending on the status of the feeder switch (on/off), SEPSS will ensure to switch the modes of proposed droop-controller of the VSIs by means of the transmitted status signal referred to as $\mathcal{I}_\mathrm{gs}$ in the rest of the article ($\mathcal{I}_\mathrm{gs}=1$ in on-grid mode and $\mathcal{I}_\mathrm{gs}=0$ in off-grid mode). Design objectives of the SEPSS are as follows:
\begin{itemize}
    \item Provide rapid and seamless power recovery where VSIs are required to be always connected to the CI network at PCC while SEPSS ensures not to de-energize these energy sources,
    \item During normal scenarios where the distribution grid is available, CI is required to be supported only by the grid and both critical and non-critical load demands needs to be met,
    \item In case of a grid failure, SEPSS is required to ensure that the CI must meet critical load demands by the local energy resources in off-grid mode. 
\end{itemize}
In the following section, the control architecture and proposed mode-dependent droop control for individual VSI is presented.
\section{Design of Single Droop-controlled Inverter}\label{control}
\begin{figure}[t]
	\centering
    \includegraphics[scale=0.24,trim={0cm 0cm 0cm 0cm},clip]{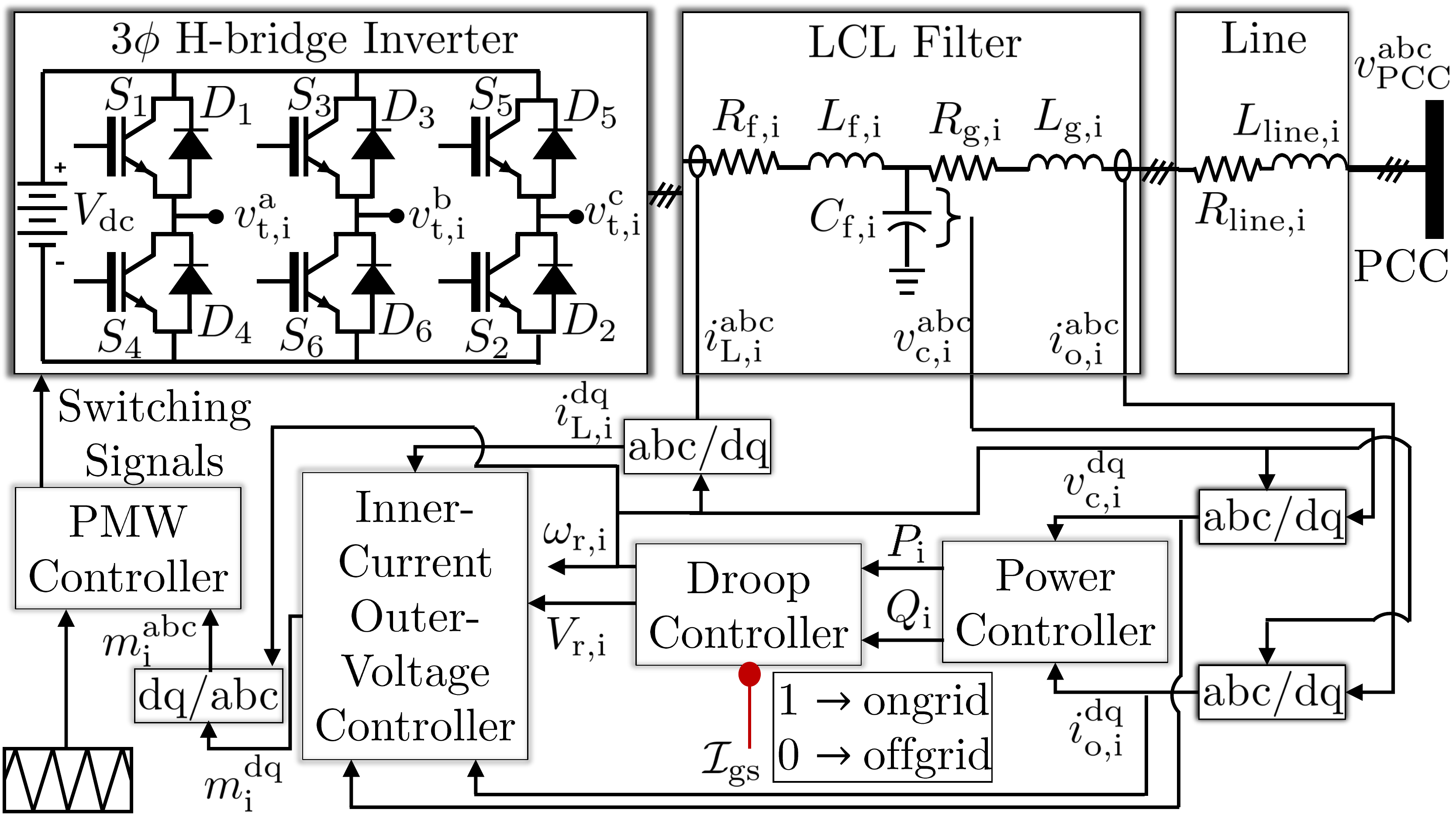}%
	\caption{The power and control block diagram of $\mathrm{i}^{th}$ VSIs connected to PCC.}
	\label{fig:circuit}
\end{figure}
The power circuit of $\mathrm{i}^{th}$, $3$-$\phi$ H-bridge VSIs consist of six switches distributed among three legs as shown in Fig.~\ref{fig:circuit}. It is connected to the network at PCC with voltage, $v^\mathrm{abc}_\mathrm{PCC}$, via an $\mathrm{LCL}$ filter ($L_\mathrm{f,i}$, $C_\mathrm{f,i}$, $L_\mathrm{g,i}$ and associated equivalent series resistances (ESRs) $R_\mathrm{f,i}$ and $R_\mathrm{g,i}$ of inductors) and a coupling line (with line parameters $L_\mathrm{line,i}$, $R_\mathrm{line,i}$) as shown in Fig.~\ref{fig:circuit}. A $\mathrm{dq}$-frame multi-loop structured controller is adopted that generates modulated voltage vector signal, $m^\mathrm{abc}_\mathrm{i}$, to pulse-width modulation (PWM) controller to produce switching signals for the power switches resulting in terminal voltages, $v^\mathrm{a}_\mathrm{t,i}$, $v^\mathrm{b}_\mathrm{t,i}$ and $v^\mathrm{c}_\mathrm{t,i}$. The control loop of Fig.~\ref{fig:circuit} is described briefly below.
\subsubsection{Power Controller}\label{power}
The $\mathrm{d}\mathrm{q}$-axis (w.r.t. $\mathrm{i}^{th}$ VSI reference frame) output voltage ($v^\mathrm{dq}_\mathrm{o,i}$) and current ($i^\mathrm{dq}_\mathrm{o,i}$) measurements are used to determine the instantaneous active power ($p_\mathrm{i}$) and reactive power ($q_\mathrm{i}$) supplied by the inverter\footnote{Notation: $x^\mathrm{abc}$ is defined as $\begin{bmatrix}x^\mathrm{a}& x^\mathrm{b}&x^\mathrm{c}\end{bmatrix}^\top$ and $x^\mathrm{dq}$ is defined as $\begin{bmatrix}x^\mathrm{d}& x^\mathrm{q}\end{bmatrix}^\top$ where (.)$^\top$ denotes transposition.}. $p_\mathrm{i}$ and $q_\mathrm{i}$ are passed through low-pass filters with the time constant, $\tau_\mathrm{S,i} \in \mathbb{R}_{>0}$, to obtain the filtered average output active and reactive power according to
\begin{align}\label{powercal}
    P_\mathrm{i} =  \dfrac{1}{\tau_\mathrm{S,i}s+1}p_\mathrm{i},~Q_\mathrm{i} =  \dfrac{1}{\tau_\mathrm{S,i}s+1}q_\mathrm{i}.
\end{align}
where $p_\mathrm{i}:=\dfrac{3}{2}[v^\mathrm{d}_\mathrm{o,i}i^\mathrm{d}_\mathrm{o,i} + v^\mathrm{q}_\mathrm{o,i}i^\mathrm{q}_\mathrm{o,i}]$ and $q_\mathrm{i}:=\dfrac{3}{2}[v^\mathrm{q}_\mathrm{o,i}i^\mathrm{d}_\mathrm{o,i}-v^\mathrm{d}_\mathrm{o,i}i^\mathrm{q}_\mathrm{o,i}]$.\vspace{1mm}
\begin{figure}[t]
	\centering
    \includegraphics[scale=0.24,trim={0cm 0cm 11cm 7cm},clip]{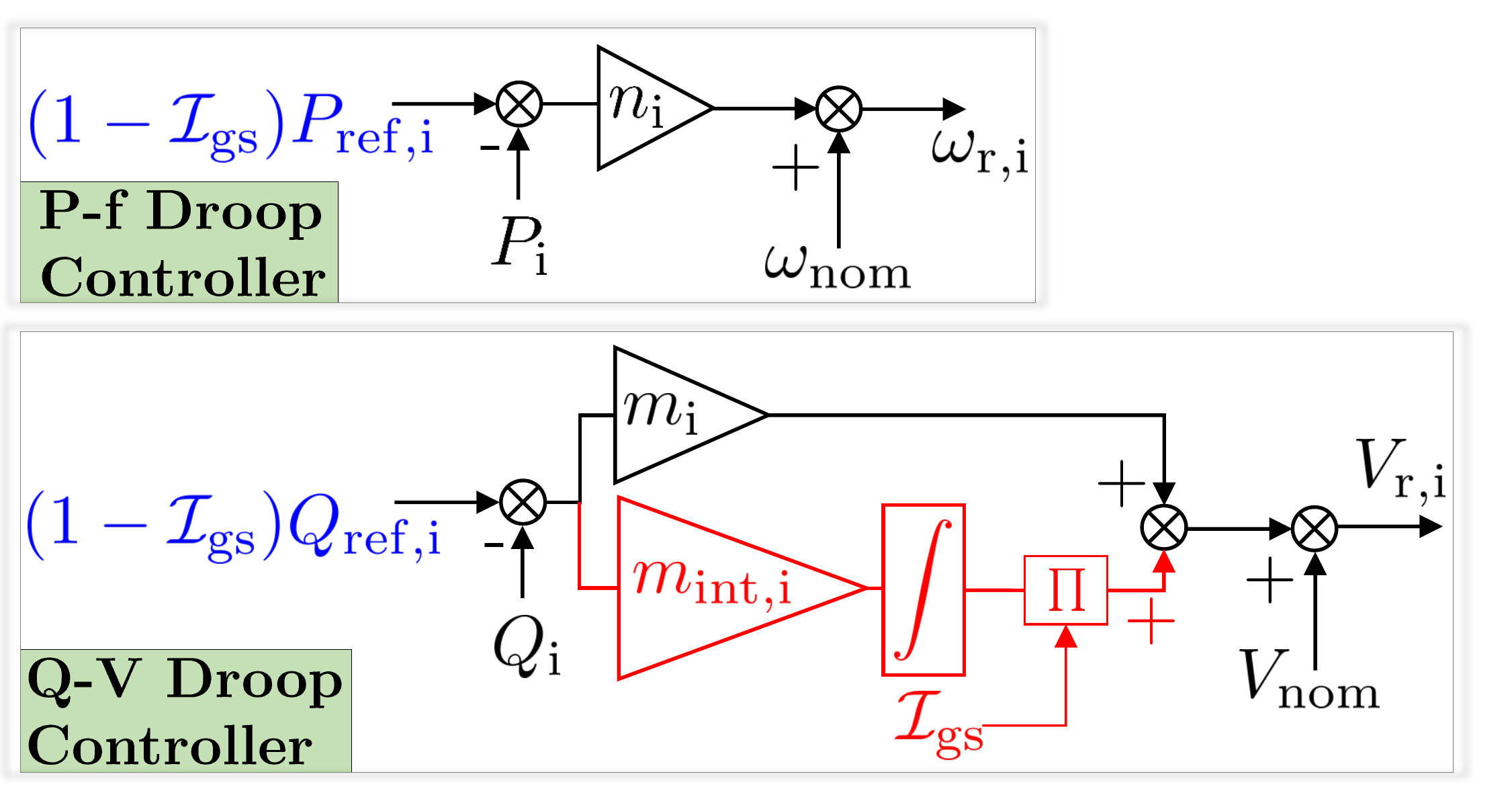}%
	\caption{The proposed mode-dependent droop controller for $\mathrm{i}^{th}$ VSI}
	\label{fig:droop}
\end{figure}
\subsubsection{Droop Controller}
Droop control is a decentralized proportional controller with active and reactive power as control variables where the control gain (which is the droop gain) dictates the steady–state power distribution in the network. The \textit{conventional} droop controls (i.e. $P$-$f$/$Q$-$V$ droop) \cite{chandorkar} are derived for the off-grid microgrids under the assumption of a dominantly inductive network, i.e. for power lines with small R/X ratios. As discussed in Section~\ref{motivation}, this paper proposes a mode-dependent droop controller for $\mathrm{i}^{th}$ VSI for both on-grid and off-grid operation of the CI as shown in Fig.~\ref{fig:droop}. The $P$-$f$ droop control is considered here as a proportional controller (with proportional coefficient as $n_\mathrm{i}$) with error signal $e_\mathrm{P,i}:=(1-\mathcal{I}_\mathrm{gs})P_\mathrm{ref,i}-P_\mathrm{i}$ where $P_\mathrm{i}$ is the control variable and $(1-\mathcal{I}_\mathrm{gs})P_\mathrm{ref,i}$ is the reference. Whereas, the $Q$-$V$ droop control is considered here as a proportional-integral controller (with proportional and integral coefficients as $m_\mathrm{i}$ and $m_\mathrm{int,i}$ respectively) with error signal $e_\mathrm{Q,i}:=(1-\mathcal{I}_\mathrm{gs})Q_\mathrm{ref,i}-Q_\mathrm{i}$ where $Q_\mathrm{i}$ is the control variable and $(1-\mathcal{I}_\mathrm{gs})Q_\mathrm{ref,i}$ is the reference. The additional integral action in $Q$-$V$ droop control is required for on-grid mode only which is ensured by additional multiplication of $\mathcal{I}_\mathrm{gs}$ with the integral part. As a result, the proposed droop law is as follows:
\begin{align}
    \omega_\mathrm{r,i} &= \omega_\mathrm{nom} - n_\mathrm{i}[P_\mathrm{i}-(1-\mathcal{I}_\mathrm{gs})P_\mathrm{ref,i}], \label{droop1}\\
    V_\mathrm{r,i} &= V_\mathrm{nom} - m_\mathrm{i}[Q_\mathrm{i}-(1-\mathcal{I}_\mathrm{gs})Q_\mathrm{ref,i}] - \mathcal{I}_\mathrm{gs}m_\mathrm{int,i}\psi^\mathrm{Q}_\mathrm{i},\label{droop2}\\
    \psi^\mathrm{Q}_\mathrm{i} &= \int [Q_\mathrm{i}-(1-\mathcal{I}_\mathrm{gs})Q_\mathrm{ref,i}]\mathrm{d}t,\label{droop3}
\end{align}
where, $\omega_\mathrm{nom}$, $V_\mathrm{nom}$ are the nominal frequency (in rad/s) and voltage set-point (in volt) of the system respectively. $P_\mathrm{ref,i}$ and $Q_\mathrm{ref,i}$ are the active and reactive power set points, which are commonly set to active and reactive power rating of the $\mathrm{i}^{th}$ VSI respectively. The proposed droop control law differs from conventional droop characteristics in the following way:
\begin{enumerate}
    \item Unlike \textit{conventional} droop control law, an additional integral action, as defined in \eqref{droop3}, is introduced in $Q$-$V$ droop equation. This results in a proportional controller for active power and proportional-integral controller for reactive power with $(1-\mathcal{I}_\mathrm{gs})P_\mathrm{ref,i}$ and $(1-\mathcal{I}_\mathrm{gs})Q_\mathrm{ref,i}$ as reference signals respectively.
    \item Status of the feeder switch of system in Fig.~\ref{fig:system1}, $\mathcal{I}_\mathrm{gs}$, is included in the droop equation that makes the droop law strategy mode dependent (on-grid/off-grid mode). The function of $\mathcal{I}_\mathrm{gs}$ is to modify the droop law based on the transition from on-grid ($\mathcal{I}_\mathrm{gs}=1$) to off-grid mode ($\mathcal{I}_\mathrm{gs}=0$).
\end{enumerate}
The proposed controller architecture facilitates  the following essential functionality of VSIs for CIs as mentioned below:
\begin{enumerate}
\item The addition of the integral term, $\psi^\mathrm{Q}_\mathrm{i}$, facilitates the VSIs in on-grid mode to supply no reactive power in steady-state. This is further emphasized in Section~\ref{analysis}.A.
\item The addition of dependency on the variable, $\mathcal{I}_\mathrm{gs}$, facilitates the VSIs seamless functionality for CIs during the transition of on-/off-grid and off-/on-grid modes.
\end{enumerate}
\par A supervisory remote island detection algorithm is fast and accurate enough which, by means of any low-bandwidth communication channel, can convey the status, $\mathcal{I}_\mathrm{gs}$, from SEPSS of CIs to its VSIs. This in turn will allow the switching of droop functionality when a grid disconnection event occurs upon grid failure. The values of $n_\mathrm{i}$ and $m_\mathrm{i}$ are typically chosen such that $\omega_\mathrm{r,i}$ and $V_\mathrm{r,i}$ are within the allowed regulation, defined by IEEE $1547$ Standard \cite{ieee1547}, for all $P_\mathrm{i} \in [0, P_\mathrm{rated,i}]$ and $Q_\mathrm{i} \in [-Q_\mathrm{rated,i}, Q_\mathrm{rated,i}]$ respectively \cite{sanjay}. $P_\mathrm{rated,i}$ and $Q_\mathrm{rated,i}$ are the rated active and reactive powers that can be delivered by the $\mathrm{i}^{th}$ inverter. Although, these empirical upper bounds of droop co-efficient facilitate the initial design of droop law, system stability-constraint bounds of $n_\mathrm{i}$, $m_\mathrm{i}$ and $m_\mathrm{int,i}$ require special attention due to the system interconnection and its seamless transition between on-grid and off-grid mode of operation. Therefore, a more mathematically rigorous modeling and analysis of the system is required in order to assess the system stability and the power quality during transitions, and the relation of the performance goals with the control parameters. This will be discussed in Section~\ref{analysis}.
\begin{figure}[t]
	\centering
    \includegraphics[scale=0.26,trim={0cm 9.5cm 0cm 0cm},clip]{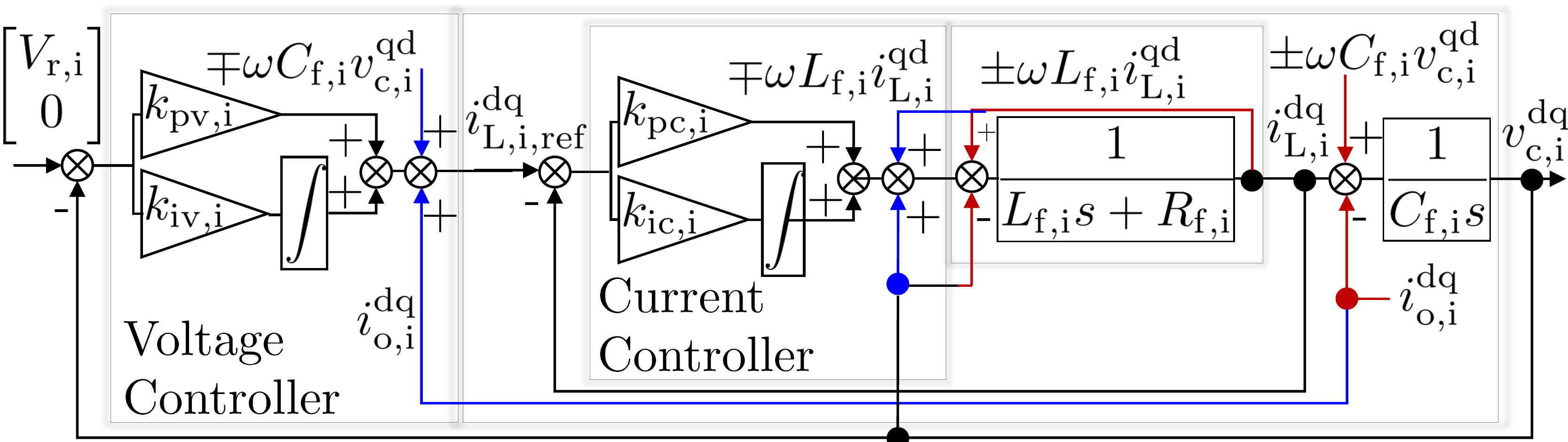}%
	\caption{Inner-current-outer-voltage control loop for $\mathrm{dq}$-axis control.}
	\label{fig:controller}
\end{figure}
\subsubsection{Inner-current-outer-voltage Controller}
The inner-current-outer-voltage controller architecture is employed, as illustrated in Fig.~\ref{fig:controller}, for the $3$-$\phi$ VSIs interconnection system \cite{iravani}. For the inner-current controller $i^\mathrm{dq}_\mathrm{L,i,ref}$ is provided as the reference signal to be tracked by the output signal, $i^\mathrm{dq}_\mathrm{L,i}$. A proportional-integral (PI) compensator is used for tracking of the reference of the $\mathrm{dq}$-axis inductor current for the $1^{st}$ order plant model as shown in Fig.~\ref{fig:controller}. For a desired time constant, $\tau_\mathrm{c,i}$, the parameters of the current controller are selected as $k_\mathrm{pc,i}=L_\mathrm{f,i}/\tau_\mathrm{c,i}$ and $k_\mathrm{ic,i}=R_\mathrm{f,i}/\tau_\mathrm{c,i}$. Depending on the switching frequency, $\tau_\mathrm{c,i}$ is typically selected to be in the range of $0.5$-$2$~ms \cite{iravani}. Additional feed-forward signals, $v^\mathrm{dq}_\mathrm{c,i}$ and $\mp \omega L_\mathrm{f,i}i^\mathrm{qd}_\mathrm{L,i}$ facilitate the disturbance rejection capability. For outer-voltage controller,  $\begin{bmatrix}V_\mathrm{r,i} & 0\end{bmatrix}^\top$ is considered to be the reference signal to be tracked by the VSI output voltage signal, $v^\mathrm{dq}_\mathrm{c,i}$. A PI compensator is used to enable reference tracking for the second order plant model with poles at $s=0, -1/\tau_\mathrm{c,i}$. For a desired phase margin and gain cross-over frequency, the parameters of the voltage controller ($k_\mathrm{pv,i}$ and $k_\mathrm{iv,i}$) can be designed based on \textit{symmetrical optimum} method \cite{iravani}. Similarly, additional feed-forward signals, $v^\mathrm{dq}_\mathrm{o,i}$ and $\mp \omega C_\mathrm{f,i}v^\mathrm{qd}_\mathrm{c,i}$ facilitate the disturbance rejection capability for the outer voltage control loop.
% that results a $3^{rd}$ order closed loop system with poles at $s=-\omega_\mathrm{cr}$ and other two poles on a circle with a radius $\omega_\mathrm{cr}$ depending on the $\delta_\mathrm{PM}$ \cite{iravani}.
\section{Modeling of System Components}\label{model}
The system of Fig.~\ref{fig:system1} is considered in this section for modeling and analysing. Both the VSIs are employed with above mentioned controllers and connected to the PCC via $\mathrm{RL}$ coupling lines with $L_\mathrm{line,i}$, $R_\mathrm{line,i}$ for $\mathrm{i}$=$1,2$. By defining $L_\mathrm{l,i}:=L_\mathrm{g,i}+L_\mathrm{line,i}$ and $R_\mathrm{l,i}:=R_\mathrm{g,i}+R_\mathrm{line,i}$ for $\mathrm{i}=1,2$ (refer to Fig.~\ref{fig:circuit}), model of individual VSI will be developed in this section. The grid is assumed to be stiff, in order to emulate a real-world power system network. The grid impedance and feeder transformer leakage impedance (of Fig.~\ref{fig:system1}) are lumped and denoted by $L_\mathrm{l,g}$ and $R_\mathrm{l,g}$. Moreover, the complete R-L building network of CI along with loads are represented by an equivalent lumped load ($R_\mathrm{L}$ and $L_\mathrm{L}$) terminated at PCC. In this section, the following practical assumptions are made for further modeling the individual components of the system.
\begin{assumption}\label{assumption1}
The inner-current-outer-voltage control loop is stable and has faster dynamics when compared to the outer-most power and droop controller \cite{reduce1,iravani}.
\end{assumption}
\begin{assumption}\label{assumption2}
The outer-voltage controller tracks its voltage reference with minimal tracking error \cite{iravani}.
\end{assumption}
% \begin{assumption}\label{assumption2}
% The outer-voltage controller has an integrator in order to achieve perfect reference tracking capability \cite{iravani}.
% \end{assumption}
\begin{assumption}\label{assumption3}
Grid is stiff with nominal voltage ($V_\mathrm{nom}$) and frequency ($\omega_\mathrm{nom}$). 
\end{assumption}
\begin{assumption}\label{assumption4}
Load is assumed to be a constant $\mathrm{Z}$-type \cite{kundur}.
\end{assumption}
\begin{assumption}\label{assumption5}
All $\mathrm{abc}$-$\mathrm{dq}$ conversions are adopted w.r.t. a common reference with phase angle, $\theta_\mathrm{ref}$, and frequency, $\omega_\mathrm{ref}$.
\end{assumption}
\begin{remark}\label{remark1}
Assumption~\ref{assumption1} and \ref{assumption2} allow to model each VSI as AC voltage source with controllable phase, frequency and amplitude, bypassing all the internal states.
\end{remark}
These assumptions are made to assess the stability of the system and to obtain guidance on the selection of parameters. Efficacy of the control strategy is further corroborated by realistic CHIL simulations where these assumptions do not necessarily hold.
\begin{figure}[t]
	\centering
	\subfloat[]{\includegraphics[scale=0.27,trim={0cm 0cm 22cm 11cm},clip]{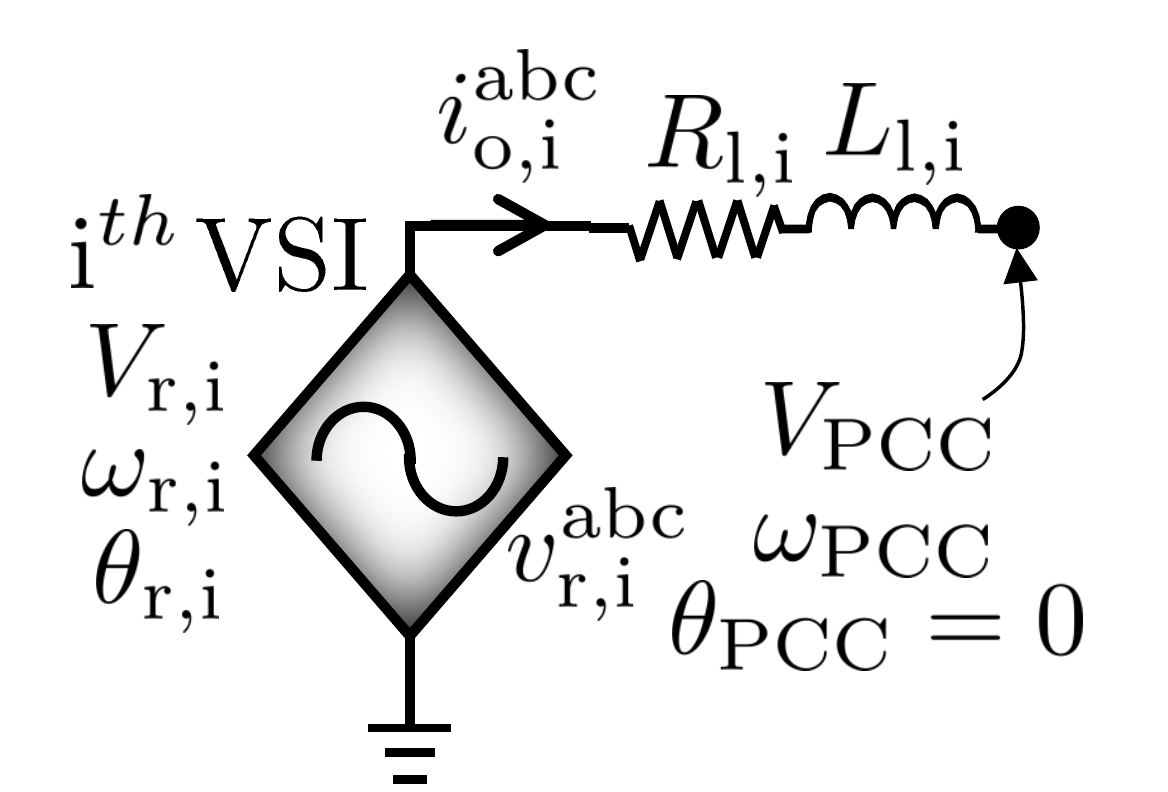}%
		\label{fig:comp1}}~
	\subfloat[]{\includegraphics[scale=0.29,trim={0cm 0cm 25cm 11cm},clip]{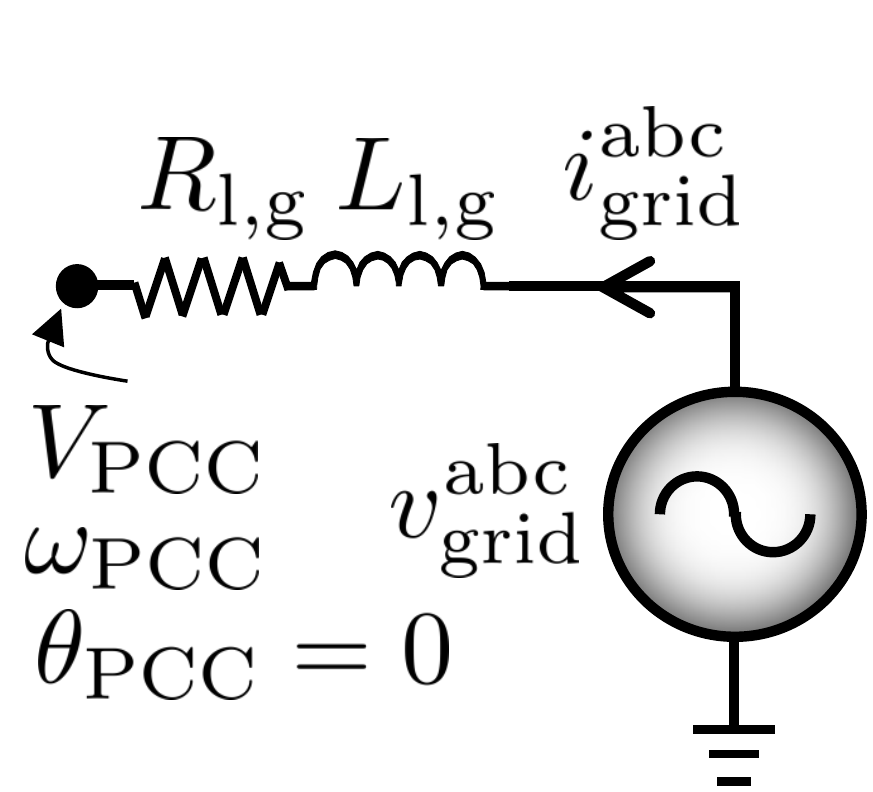}%
		\label{fig:comp2}}~
	\subfloat[]{\includegraphics[scale=0.28,trim={0cm 0cm 22cm 10.5cm},clip]{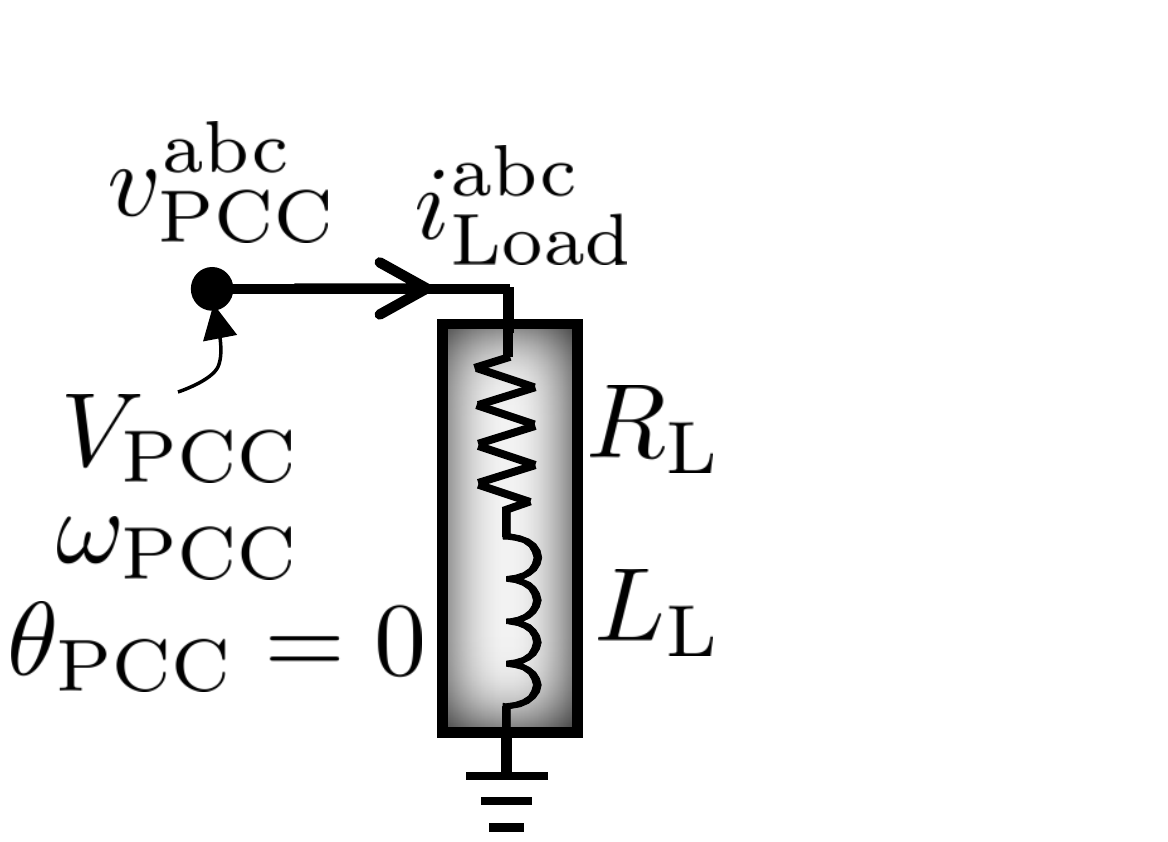}%
		\label{fig:comp3}}
	\caption{Modeling of individual components of the system shown in Fig.~\ref{fig:system1}, (a) model of $\mathrm{i}^{th}$ VSI, (b) model of grid, (c) R-L model of equivalent network and load, connected to the PCC}
	\label{fig:comp}
\end{figure}
\subsection{Modelling of a Single VSI in Off-grid Mode ($\mathcal{I}_\mathrm{gs}$=$0$)}
Noting Remark~\ref{remark1} and following the model of Fig.~\ref{fig:comp}\subref{fig:comp1} in off-grid mode with $\mathrm{i}^{th}$ VSI, the following non-linear state-space equations can be obtained for $\mathrm{i}=1,2$:
\begin{align}\label{thetaID0}
    \dfrac{\mathrm{d\theta_\mathrm{i}}}{\mathrm{d}t} = \omega_\mathrm{r,i}-\omega_\mathrm{ref},~\text{where}~\theta_\mathrm{i}:=\theta_\mathrm{r,i}-\theta_\mathrm{ref}.
\end{align}
$\theta_\mathrm{r,i}$ is the phase angle of $\mathrm{i}^{th}$ VSI w.r.t. the PCC ($\theta_\mathrm{PCC}=0$). Using \eqref{powercal}, Assumption~\ref{assumption5} and \eqref{droop1} with $\mathcal{I}_\mathrm{gs}=0$ for off-grid mode, the frequency dynamics can be written as:
\begin{align}
    \tau_\mathrm{S,i}\dfrac{\mathrm{d\omega_\mathrm{r,i}}}{\mathrm{d}t}&=\omega_\mathrm{nom}-\omega_\mathrm{r,i}-n_\mathrm{i}[p_\mathrm{i} -P_\mathrm{ref,i}],\label{freqID0a}\\
    \text{where,~}p_\mathrm{i}&:=\dfrac{3}{2}[V_\mathrm{r,i}i_\mathrm{o,i}^\mathrm{d}\cos\theta_\mathrm{i}+V_\mathrm{r,i}i_\mathrm{o,i}^\mathrm{q}\sin\theta_\mathrm{i}].\label{freqID0b}
\end{align}
Similarly, using \eqref{powercal}, Assumption~\ref{assumption5} and \eqref{droop2} with $\mathcal{I}_\mathrm{gs}=0$ (off-grid mode), voltage dynamics can be written as:
\begin{align}
    \tau_\mathrm{S,i}\dfrac{\mathrm{d}V_\mathrm{r,i}}{\mathrm{d}t}&=V_\mathrm{nom}-V_\mathrm{r,i}-m_\mathrm{i}[q_\mathrm{i} -Q_\mathrm{ref,i}],\label{voltID0a}\\
    \text{where,~}q_\mathrm{i}&:=\dfrac{3}{2}[-V_\mathrm{r,i}i_\mathrm{o,i}^\mathrm{q}\cos\theta_\mathrm{i}+V_\mathrm{r,i}i_\mathrm{o,i}^\mathrm{d}\sin\theta_\mathrm{i}].\label{voltID0b}
\end{align}
In addition, the coupling lines impose the follow dynamics using Assumption~\ref{assumption5}:
\begin{align}
    L_\mathrm{l,i}\dfrac{\mathrm{d}i_\mathrm{o,i}^\mathrm{d}}{\mathrm{d}t}&=V_\mathrm{r,i}\cos\theta_\mathrm{i}-R_\mathrm{l,i}i_\mathrm{o,i}^\mathrm{d}-v_\mathrm{PCC}^\mathrm{d}+\omega_{\mathrm{ref}}L_\mathrm{l,i}i_\mathrm{o,i}^\mathrm{q},\label{ida}\\
    L_\mathrm{l,i}\dfrac{\mathrm{d}i_\mathrm{o,i}^\mathrm{q}}{\mathrm{d}t}&=V_\mathrm{r,i}\sin\theta_\mathrm{i}-R_\mathrm{l,i}i_\mathrm{o,i}^\mathrm{q}-v_\mathrm{PCC}^\mathrm{q}-\omega_{\mathrm{ref}}L_\mathrm{l,i}i_\mathrm{o,i}^\mathrm{d},\label{idb}
\end{align}
where, $v_\mathrm{PCC}^\mathrm{d}:=V_\mathrm{PCC}\cos \theta_\mathrm{ref}$ and $v_\mathrm{PCC}^\mathrm{q}:=-V_\mathrm{PCC}\sin \theta_\mathrm{ref}$. This results in a $5^{th}$-order non-linear electromagnetic (EM) model for the droop-controlled VSI operating in off-grid mode.
\subsection{Modelling of a Single VSI in on-grid Mode ($\mathcal{I}_\mathrm{gs}$=$1$)}
Noting Remark~\ref{remark1} and following the model of Fig.~\ref{fig:comp}\subref{fig:comp1} in on-grid mode with $\mathrm{i}^{th}$ VSI, the angular state-space equation remains the same as \eqref{thetaID0}.
% \begin{align}\label{thetaID1}
%     \dfrac{\mathrm{d\theta_\mathrm{i}}}{\mathrm{d}t} = \omega_\mathrm{r,i}-\omega_\mathrm{PCC},~\text{where}~\theta_\mathrm{i}:=\theta_\mathrm{r,i}-\theta_\mathrm{PCC}
% \end{align}
Again using \eqref{powercal}, Assumption~\ref{assumption5} and \eqref{droop1} with $\mathcal{I}_\mathrm{gs}=1$ for on-grid mode, frequency dynamics can be written as:
\begin{align}
    \tau_\mathrm{S,i}\dfrac{\mathrm{d\omega_\mathrm{r,i}}}{\mathrm{d}t}&=\omega_\mathrm{nom}-\omega_\mathrm{r,i}-n_\mathrm{i}p_\mathrm{i},\label{freqID1a}\\
    \text{where,~}p_\mathrm{i}&:=\dfrac{3}{2}[V_\mathrm{r,i}i_\mathrm{o,i}^\mathrm{d}\cos\theta_\mathrm{i}+V_\mathrm{r,i}i_\mathrm{o,i}^\mathrm{q}\sin\theta_\mathrm{i}].\label{freqID1b}
\end{align}
Using \eqref{droop2} and \eqref{droop3} with $\mathcal{I}_\mathrm{gs}=1$, the following can be derived as:
\begin{align}
    \dfrac{\mathrm{d}\psi^\mathrm{Q}_\mathrm{i}}{\mathrm{d}t}&=Q_\mathrm{i} = (m_\mathrm{i})^{-1}[V_\mathrm{nom}-V_\mathrm{r,i}-m_\mathrm{int,i}\psi^\mathrm{Q}_\mathrm{i}] \nonumber\\
    \implies m_\mathrm{i}\dfrac{\mathrm{d}\psi^\mathrm{Q}_\mathrm{i}}{\mathrm{d}t} &= V_\mathrm{nom}-V_\mathrm{r,i}-m_\mathrm{int,i}\psi^\mathrm{Q}_\mathrm{i}.\label{psi}
\end{align}
Using Assumption~\ref{assumption5}, \eqref{powercal} and \eqref{droop2} with $\mathcal{I}_\mathrm{gs}=1$ and \eqref{psi}, voltage dynamics can be written as:
\begin{align}
    \tau_\mathrm{S,i}\dfrac{\mathrm{d}V_\mathrm{r,i}}{\mathrm{d}t}&=k_\mathrm{m,i}(V_\mathrm{nom}-V_\mathrm{r,i}-m_\mathrm{int,i}\psi^\mathrm{Q}_\mathrm{i})-m_\mathrm{i}q_\mathrm{i},\label{voltID1a}\\
    q_\mathrm{i}&:=\dfrac{3}{2}[-V_\mathrm{r,i}i_\mathrm{o,i}^\mathrm{q}\cos\theta_\mathrm{i}+V_\mathrm{r,i}i_\mathrm{o,i}^\mathrm{d}\sin\theta_\mathrm{i}],\label{voltID1b}
\end{align}
where, $k_\mathrm{m,i}:=1- \tau_\mathrm{S,i}m_\mathrm{int,i}/m_\mathrm{i}$. Similarly, the coupling line imposes dynamics same as given in \eqref{ida} and \eqref{idb}.
% \begin{align}
%     L_\mathrm{l,i}\dfrac{\mathrm{d}i_\mathrm{o,i}^\mathrm{d}}{\mathrm{d}t}&=V_\mathrm{r,i}\cos\theta_\mathrm{i}-R_\mathrm{l,i}i_\mathrm{o,i}^\mathrm{d}-V_\mathrm{PCC}+\omega_{\mathrm{PCC}}L_\mathrm{l,i}i_\mathrm{o,i}^\mathrm{q}\label{idaa}\\
%     L_\mathrm{l,i}\dfrac{\mathrm{d}i_\mathrm{o,i}^\mathrm{q}}{\mathrm{d}t}&=V_\mathrm{r,i}\sin\theta_\mathrm{i}-R_\mathrm{l,i}i_\mathrm{o,i}^\mathrm{q}-\omega_{\mathrm{PCC}}L_\mathrm{l,i}i_\mathrm{o,i}^\mathrm{d}\label{idbb}
% \end{align}
This results in a $6^{th}$-order non-linear EM model for the droop-controlled VSI operating in on-grid mode.
\subsection{Modelling of the Stiff Grid}
Following the model as shown in Fig.~\ref{fig:comp}\subref{fig:comp2} and Assumption~\ref{assumption3}, the following non-linear state-space equations can be derived for the grid
\begin{align}\label{thetagrid}
    \dfrac{\mathrm{d\theta_\mathrm{g}}}{\mathrm{d}t} &= \omega_\mathrm{grid}-\omega_\mathrm{ref},~\text{where}~\theta_\mathrm{g}:=\theta_\mathrm{grid}-\theta_\mathrm{ref},\\
    \omega_\mathrm{grid}&=\omega_\mathrm{nom},~\text{and},~V_\mathrm{g}=V_\mathrm{nom}.\nonumber
\end{align}
$\theta_\mathrm{grid}$ is the phase angle of the grid w.r.t. the PCC. Similarly, the line dynamics in Fig.~\ref{fig:comp}\subref{fig:comp2} add two more states, with Assumption~\ref{assumption5}, as follows:
\begin{align}
    L_\mathrm{l,g}\dfrac{\mathrm{d}i_\mathrm{g}^\mathrm{d}}{\mathrm{d}t}&=V_\mathrm{g}\cos\theta_\mathrm{g}-R_\mathrm{l,g}i_\mathrm{g}^\mathrm{d}-v_\mathrm{PCC}^\mathrm{d}+\omega_{\mathrm{ref}}L_\mathrm{l,g}i_\mathrm{g}^\mathrm{q},\label{idg}\\
    L_\mathrm{l,g}\dfrac{\mathrm{d}i_\mathrm{g}^\mathrm{q}}{\mathrm{d}t}&=V_\mathrm{g}\sin\theta_\mathrm{g}-R_\mathrm{l,g}i_\mathrm{g}^\mathrm{q}-v_\mathrm{PCC}^\mathrm{q}-\omega_{\mathrm{PCC}}L_\mathrm{l,g}i_\mathrm{g}^\mathrm{d}.\label{iqg}
\end{align}
This results in a $3^{rd}$-order non-linear EM model for the stiff grid at nominal voltage and frequency.
\subsection{Modelling of Load at PCC}
Using Assumption~\ref{assumption4} and \ref{assumption5}, the following dynamics equations can be derived for Fig.~\ref{fig:comp}\subref{fig:comp3}:
\begin{align}
    L_\mathrm{L}\dfrac{\mathrm{d}i_\mathrm{Load}^\mathrm{d}}{\mathrm{d}t} &= v_\mathrm{PCC}^\mathrm{d}-R_\mathrm{L}i_\mathrm{Load}^\mathrm{d} + \omega_\mathrm{ref}L_\mathrm{L}i_\mathrm{Load}^\mathrm{q}, \label{load1}\\
    L_\mathrm{L}\dfrac{\mathrm{d}i_\mathrm{Load}^\mathrm{q}}{\mathrm{d}t} &= v_\mathrm{PCC}^\mathrm{q}-R_\mathrm{L}i_\mathrm{Load}^\mathrm{q} - \omega_\mathrm{ref}L_\mathrm{L}i_\mathrm{Load}^\mathrm{d}. \label{load2}
\end{align}
Depending on on-grid/off-grid mode, $i_\mathrm{Load}^\mathrm{d}$ and $i_\mathrm{Load}^\mathrm{q}$ can be replaced as $i_\mathrm{Load}^\mathrm{d}=\mathcal{I}_\mathrm{gs}~i_\mathrm{g}^\mathrm{d}+\sum\limits_{i=1,2}i_\mathrm{o,i}^\mathrm{d}$ and $i_\mathrm{Load}^\mathrm{q}=\mathcal{I}_\mathrm{gs}~i_\mathrm{g}^\mathrm{q}+\sum\limits_{i=1,2}i_\mathrm{o,i}^\mathrm{q}$. By using \eqref{ida}, \eqref{idb}, \eqref{idg} and \eqref{iqg} in \eqref{load1} and \eqref{load2}, the following algebraic equation can be obtained to model the load at PCC:
\begin{align}\label{load}
    &v_\mathrm{PCC}^\mathrm{d}=\mathcal{I}_\mathrm{gs}[k^\mathrm{V}_\mathrm{g}V_\mathrm{g}\cos\theta_\mathrm{g}+k^\mathrm{C}_\mathrm{g}i_\mathrm{g}^\mathrm{d}]+\hspace*{-2mm}\sum_{i=1,2}[k^\mathrm{V}_\mathrm{i}V_\mathrm{r,i}\cos\theta_\mathrm{i}+k^\mathrm{C}_\mathrm{i}i_\mathrm{o,i}^\mathrm{d}], \\
    &v_\mathrm{PCC}^\mathrm{q}=\mathcal{I}_\mathrm{gs}[k^\mathrm{V}_\mathrm{g}V_\mathrm{g}\sin\theta_\mathrm{g}+k^\mathrm{C}_\mathrm{g}i_\mathrm{g}^\mathrm{q}]+\hspace*{-2mm}\sum_{i=1,2}[k^\mathrm{V}_\mathrm{i}V_\mathrm{r,i}\sin\theta_\mathrm{i}+k^\mathrm{C}_\mathrm{i}i_\mathrm{o,i}^\mathrm{q}], \\
    &k^\mathrm{V}_\mathrm{g}=\frac{\frac{1}{L_\mathrm{l,g}}}{k^\mathrm{T}},~k^\mathrm{V}_\mathrm{i}=\frac{\frac{1}{L_\mathrm{l,i}}}{k^\mathrm{T}},~k^\mathrm{C}_\mathrm{g}=\frac{\frac{R_\mathrm{L}}{L_\mathrm{L}}-\frac{R_\mathrm{l,g}}{L_\mathrm{l,g}}}{k^\mathrm{T}},~k^\mathrm{C}_\mathrm{i}=\frac{\frac{R_\mathrm{L}}{L_\mathrm{L}}-\frac{R_\mathrm{l,i}}{L_\mathrm{l,i}}}{k^\mathrm{T}}, \nonumber
\end{align}
where, $k^\mathrm{T}=1/L_\mathrm{L}+\mathcal{I}_\mathrm{gs}/L_\mathrm{g}+\sum\limits_{i=1,2}1/L_\mathrm{l,i}$.
As a result, the complete system model of Fig.~\ref{fig:system1} along with the control loops can be equivalently represented as shown in Fig.~\ref{fig:system}. In the next section, the combined state space model of the system shown in Fig.~\ref{fig:system}, will be developed and the stability will be assessed for different operating conditions.
\section{Analysis of System Model and Stability}\label{analysis}
\begin{figure}[t]
	\centering
    \includegraphics[scale=0.26,trim={0cm 0cm 0cm 0cm},clip]{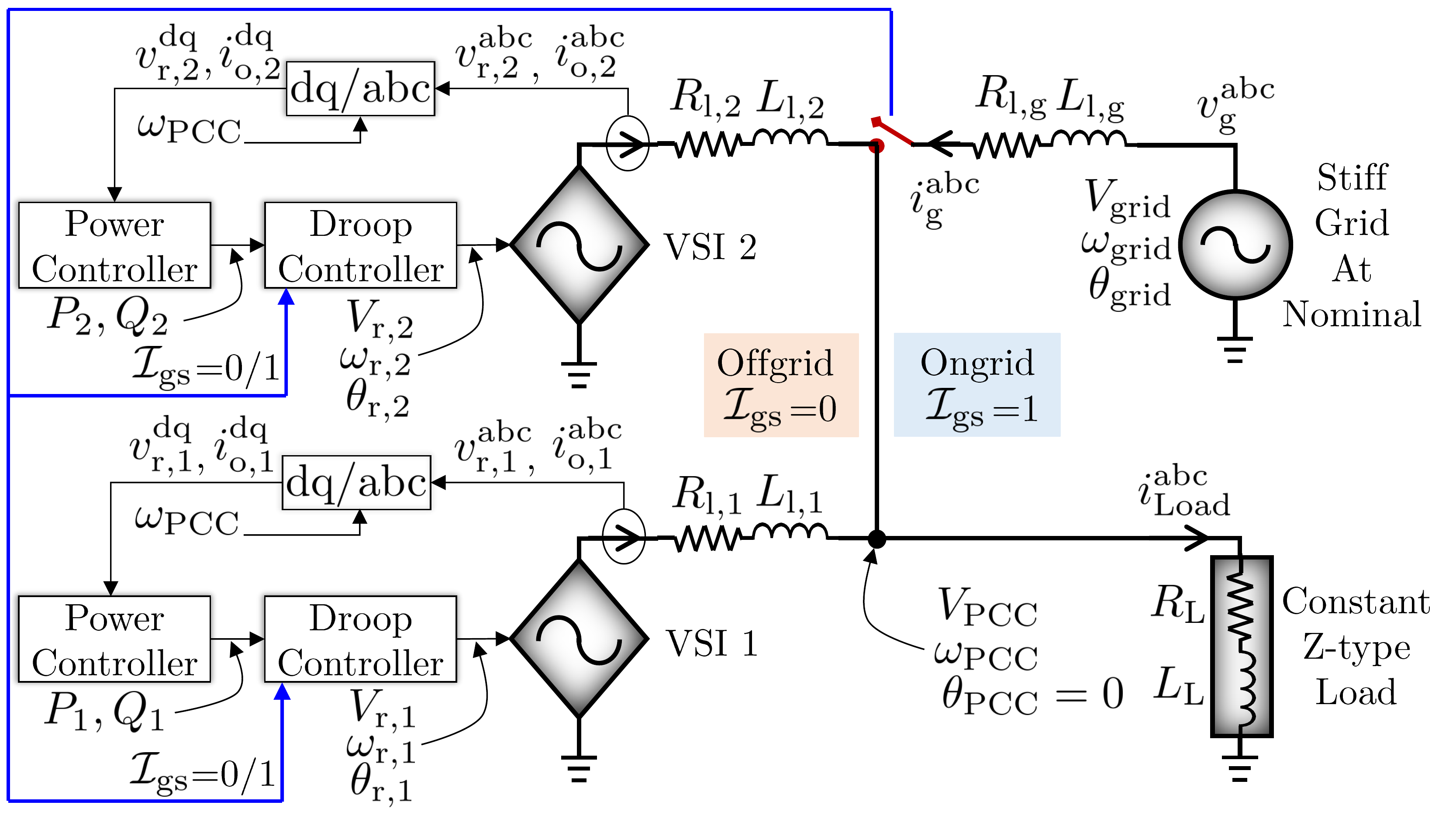}%
	\caption{Equivalent system diagram of Fig.~\ref{fig:system1}.}
	\label{fig:system}
\end{figure}
\subsection{System operating in on-grid mode}
While the system of Fig.~\ref{fig:system} is operating in on-grid mode, i.e. $\mathcal{I}_\mathrm{gs}=1$, it can be modelled as the following non-linear system:
\begin{align}\label{ongridmodel}
    \dfrac{\mathrm{d}\underline{x}_\mathrm{on}}{\mathrm{d}t}&=\underline{\mathcal{G}}_\mathrm{on}(\underline{x}_\mathrm{on}),
\end{align}
where, $\underline{\mathcal{G}}_\mathrm{on}(.)$ consists of \eqref{thetaID0}, \eqref{freqID1a}, \eqref{freqID1b}, \eqref{psi}, \eqref{voltID1a}, \eqref{voltID1b}, \eqref{ida}, \eqref{idb} for $\mathrm{i}=1,2$, \eqref{thetagrid}, \eqref{idg}, \eqref{iqg} and \eqref{load}. Here, $\underline{x}_\mathrm{on}^\top=[\theta_\mathrm{1}~\theta_\mathrm{2}~\theta_\mathrm{g}~\omega_\mathrm{r,1}~\omega_\mathrm{r,2}~V_\mathrm{r,1}~V_\mathrm{r,2}~\psi^\mathrm{Q}_\mathrm{1}~\psi^\mathrm{Q}_\mathrm{2}~i_\mathrm{o,1}^\mathrm{d}~i_\mathrm{o,2}^\mathrm{d}~i_\mathrm{g}^\mathrm{d}~i_\mathrm{o,1}^\mathrm{q}~i_\mathrm{o,2}^\mathrm{q}~i_\mathrm{g}^\mathrm{q}]$. As a result, the complete system in on-grid mode is a $15$-order nonlinear autonomous system with \textit{state} vector $\underline{x}_\mathrm{on}$.
\begin{theorem}\label{theorem1}
Consider the system as shown in Fig.~\ref{fig:system} holding Assumption~\ref{assumption3}, operating in on-grid mode where both the VSIs are following droop laws given in \eqref{droop1}, \eqref{droop2} and \eqref{droop3}. In steady-state, both the VSIs will supply zero active and reactive output power and the grid will supply the entire demand at PCC.
\end{theorem}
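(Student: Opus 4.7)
The plan is to set all time derivatives in the autonomous system \eqref{ongridmodel} to zero and extract $p_\mathrm{i}$ and $q_\mathrm{i}$ directly at equilibrium, then use these to conclude that the inverter output currents vanish, which in turn forces the grid to supply the entire load. The key structural observations are that the frequency-droop loop, at equilibrium, reduces to the algebraic constraint $\omega_\mathrm{nom}-\omega_\mathrm{r,i}-n_\mathrm{i}p_\mathrm{i}=0$ while being tied to the stiff grid via $\theta_\mathrm{g}$, and that the reactive-droop loop contains an explicit integrator state $\psi^\mathrm{Q}_\mathrm{i}$ satisfying $\dot\psi^\mathrm{Q}_\mathrm{i}=Q_\mathrm{i}$ (as stated in the first line of \eqref{psi}).

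First I would pin down the steady-state frequency. Setting $\dot\theta_\mathrm{g}=0$ in \eqref{thetagrid} together with Assumption~\ref{assumption3} forces $\omega_\mathrm{ref}=\omega_\mathrm{nom}$, and then $\dot\theta_\mathrm{i}=0$ in \eqref{thetaID0} gives $\omega_\mathrm{r,i}=\omega_\mathrm{nom}$ for $\mathrm{i}=1,2$. Substituting into the equilibrium of \eqref{freqID1a} yields $n_\mathrm{i}p_\mathrm{i}=0$, hence $p_\mathrm{i}=0$; by unity DC gain of the LPF in \eqref{powercal}, $P_\mathrm{i}=0$ as well. Next, $\dot\psi^\mathrm{Q}_\mathrm{i}=0$ combined with $\dot\psi^\mathrm{Q}_\mathrm{i}=Q_\mathrm{i}$ immediately gives $Q_\mathrm{i}=0$, and therefore $q_\mathrm{i}=0$. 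This settles the first half of the theorem.

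For the second half, I would view the equations $p_\mathrm{i}=q_\mathrm{i}=0$ together with the definitions \eqref{freqID1b} and \eqref{voltID1b} as a $2{\times}2$ linear system in $(i^\mathrm{d}_\mathrm{o,i},i^\mathrm{q}_\mathrm{o,i})$; after factoring out $3V_\mathrm{r,i}/2$, the coefficient matrix has determinant $-1$, so both components of the output current must vanish whenever $V_\mathrm{r,i}\neq 0$. Kirchhoff's current law at the PCC, already embedded in the derivation of \eqref{load} through $i^\mathrm{dq}_\mathrm{Load}=\mathcal{I}_\mathrm{gs} i^\mathrm{dq}_\mathrm{g}+\sum_\mathrm{i} i^\mathrm{dq}_\mathrm{o,i}$, then forces $i^\mathrm{dq}_\mathrm{g}=i^\mathrm{dq}_\mathrm{Load}$, i.e., the grid supplies the entire load demand. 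The delicate step, which I expect to be the main obstacle, is justifying $V_\mathrm{r,i}\neq 0$ at the equilibrium: plugging $i^\mathrm{dq}_\mathrm{o,i}=0$ into the coupling-line steady state \eqref{ida}--\eqref{idb} gives $V_\mathrm{r,i}^2=(v^\mathrm{d}_\mathrm{PCC})^2+(v^\mathrm{q}_\mathrm{PCC})^2=V_\mathrm{PCC}^2$, which, together with the stiff-grid constraint $V_\mathrm{PCC}\approx V_\mathrm{nom}>0$ from Assumption~\ref{assumption3}, closes the argument; the existence and uniqueness of such a physically meaningful equilibrium for the $15$-state nonlinear system is what the small-signal analysis in the rest of Section~\ref{analysis} will support.
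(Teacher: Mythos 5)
Your proof is correct, and its first half follows essentially the same route as the paper's: the steady states of \eqref{thetaID0} and \eqref{thetagrid} pin $\omega_\mathrm{r,1}=\omega_\mathrm{r,2}=\omega_\mathrm{nom}$, and \eqref{freqID1a} then forces $n_\mathrm{i}P_\mathrm{i}=0$. For reactive power you take a shortcut the paper does not: you read $Q_\mathrm{i}=0$ directly from $\dot\psi^\mathrm{Q}_\mathrm{i}=Q_\mathrm{i}=0$ (the first equality in \eqref{psi}, i.e.\ the definition \eqref{droop3}), whereas the paper combines the steady states of \eqref{psi} and \eqref{voltID1a} to conclude $m_\mathrm{i}Q_\mathrm{i}=k_\mathrm{m,i}\cdot 0=0$; your version is more direct and does not touch $k_\mathrm{m,i}$ at all. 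Where you genuinely go beyond the paper is the second claim: the paper stops at $P_\mathrm{i}=Q_\mathrm{i}=0$ and simply asserts that the grid supplies the whole load, while you actually close that gap by noting that $p_\mathrm{i}=q_\mathrm{i}=0$ together with \eqref{freqID1b} and \eqref{voltID1b} is a $2\times 2$ system whose coefficient matrix (after factoring $\tfrac{3}{2}V_\mathrm{r,i}$) has determinant $-1$, so $i^\mathrm{dq}_\mathrm{o,i}=0$, and then applying KCL at the PCC. One caveat: your justification of $V_\mathrm{r,i}\neq 0$ is circular as written, since you substitute $i^\mathrm{dq}_\mathrm{o,i}=0$ into \eqref{ida}--\eqref{idb} to obtain $V_\mathrm{r,i}=V_\mathrm{PCC}$, but that substitution presupposes the conclusion that $V_\mathrm{r,i}\neq 0$ was meant to deliver. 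The clean formulation is a case split: the equilibrium either has $V_\mathrm{r,i}=0$ (a collapsed inverter voltage, excluded by restricting attention to the physically meaningful operating equilibrium near nominal) or $i^\mathrm{dq}_\mathrm{o,i}=0$. You flag this yourself as the delicate step, and it is a level of care the paper's own proof does not attempt.
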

\begin{proof}
Proof can be found in the Appendix A.
\end{proof}
\noindent Based on the Theorem~\ref{theorem1}, following remarks can be made:
\begin{remark}\label{remark2}
In the proposed control for CI, battery-interfaced-VSIs with droop control law stated in \eqref{droop1}, \eqref{droop2} and \eqref{droop3} do not need to de-energize and can remain connected to the system, while the grid is available to supply all the electrical loads.
\end{remark}
\begin{remark}\label{remark3}
Seamless power flow can be achieved to CI while grid is disconnected as VSIs are always synchronized to the system and no transfer-switching process is required.
\end{remark}
However a systematic methodology, in order to characterize the ranges of various control parameters of the VSIs that lead to guaranteed stability of the system in on-grid mode needs to be characterized for designing the proposed controllers for CIs. Thus, the following problem can be formulated:
\begin{problem}\label{problem1}
Consider the system of \eqref{ongridmodel} linearized around
an equilibrium point, $\underline{x}_\mathrm{on}^\mathrm{eq}$
\begin{align}\label{linearongrid}
    \dfrac{\mathrm{d}\Delta \underline{x}_\mathrm{on}}{\mathrm{d}t}=\underbrace{\mathcal{\underline{F}}_\mathrm{on}(\underline{x}_\mathrm{on})\big|_\mathrm{\underline{x}_\mathrm{on}^\mathrm{eq}}}_{\mathbf{A}_\mathrm{on}}\Delta \underline{x}_\mathrm{on},
\end{align}
where $\mathbf{A_{\mathrm{on}}}$ is the Jacobian matrix of $\mathcal{\underline{F}}_\mathrm{on}(.)$, the vector field of \eqref{ongridmodel}. Develop a systematic framework for tuning the control parameters i.e. $n_\mathrm{1}$, $n_\mathrm{2}$, $m_\mathrm{1}$, $m_\mathrm{2}$, $m_\mathrm{int,1}$, $m_\mathrm{int,2}$, $\tau_\mathrm{1}$, $\tau_\mathrm{2}$, so that asymptotic stability of the equilibrium of \eqref{linearongrid} is guaranteed.
\end{problem}
The linearized model of the system is further reduced by suppressing the EM dynamics of the lines and loads. The assumption made here is that the characteristic EM timescale for lines and loads are slower ($\sim$~$3$-$5$~ms \cite{kundur}) and are below the characteristic timescale of droop controls ($\sim$~$30$-$40$~ms \cite{green}), dictated by the choice of the parameter, $\tau_\mathrm{S,i}$ \cite{dhople}. This strong time-scale separation in the system between the slow states denoted as $\underline{x}_\mathrm{on}^\mathrm{s}:=[\theta_\mathrm{1}~\theta_\mathrm{2}~\theta_\mathrm{g}~\omega_\mathrm{r,1}~\omega_\mathrm{r,2}~V_\mathrm{r,1}~V_\mathrm{r,2}~\psi^\mathrm{Q}_\mathrm{1}~\psi^\mathrm{Q}_\mathrm{2}]^\top$, and the fast states denoted as $\underline{x}_\mathrm{on}^\mathrm{f}:=[i_\mathrm{o,1}^\mathrm{d}~i_\mathrm{o,2}^\mathrm{d}~i_\mathrm{g}^\mathrm{d}~i_\mathrm{o,1}^\mathrm{q}~i_\mathrm{o,2}^\mathrm{q}~i_\mathrm{g}^\mathrm{q}]^\top$, allows the assumption that the fast states reach close to their quasi-steady-state values (i.e. $\underline{x}_\mathrm{on,ss}^\mathrm{f}$), much faster than the slow states \cite{modred1,modred2}. More elaborated discussion is presented in Appendix~B. Therefore, only remaining slow states are of interest from the stability point of view of the following reduced-order model of the system of \eqref{linearongrid} given as:
\begin{align}\label{redlinearongrid}
        \dfrac{\mathrm{d}\Delta \underline{x}_\mathrm{on}^\mathrm{s}}{\mathrm{d}t}=\mathbf{A}_\mathrm{{on}}^\mathrm{{s}}\Delta \underline{x}_\mathrm{on}^\mathrm{s}.
\end{align}
With the equilibrium point of the model in \eqref{redlinearongrid} and the parameters indicated in Table~\ref{table:control} and \ref{table:data}, the root locus-based analysis is conducted in order to obtain a range of various control parameters, as stated in Problem~\ref{problem1}, empirically. With the assumption of having homogeneous control parameters, i.e. $n_\mathrm{1}=n_\mathrm{2}$, $m_\mathrm{1}=m_\mathrm{2}$, $m_\mathrm{int,1}=m_\mathrm{int,2}$ and $\tau_\mathrm{1}=\tau_\mathrm{2}$, the complete eigenvalue spectrum of the system state matrix $\mathbf{A}^\mathrm{{s}}_\mathrm{{on}}$ are shown in Fig.~\ref{fig:locusgrid}\subref{locusngrid}, \ref{fig:locusgrid}\subref{locusmgrid}, \ref{fig:locusgrid}\subref{locustaugrid} and \ref{fig:locusgrid}\subref{locusmintgrid}. It is observed here that with the increase of these control parameters individually, some of the low-frequency eigen values gradually move away from the real axis, meanwhile being close to the imaginary axis until the eigenvalues reach unstable right-half zone. On the contrary, some of the stable eigenvalues gradually move on the real axis only away from the imaginary axis. It is evident that the low-frequency eigenvalues near the imaginary axis are the most dominant and crucial for system stability, while the eigenvalues close to the real axis will affect the dynamic performance and damping properties of the system \cite{muller}. An empirical range of various control parameters, with a numerical bounds, affecting the microgrid stability and dynamic performance are obtained from the critical value of the root locus as tabulated in Table~\ref{table:control}.
\begin{figure}[t]
	\centering
	\subfloat[]{\includegraphics[scale=0.25,trim={0cm 0cm 17cm 7.5cm},clip]{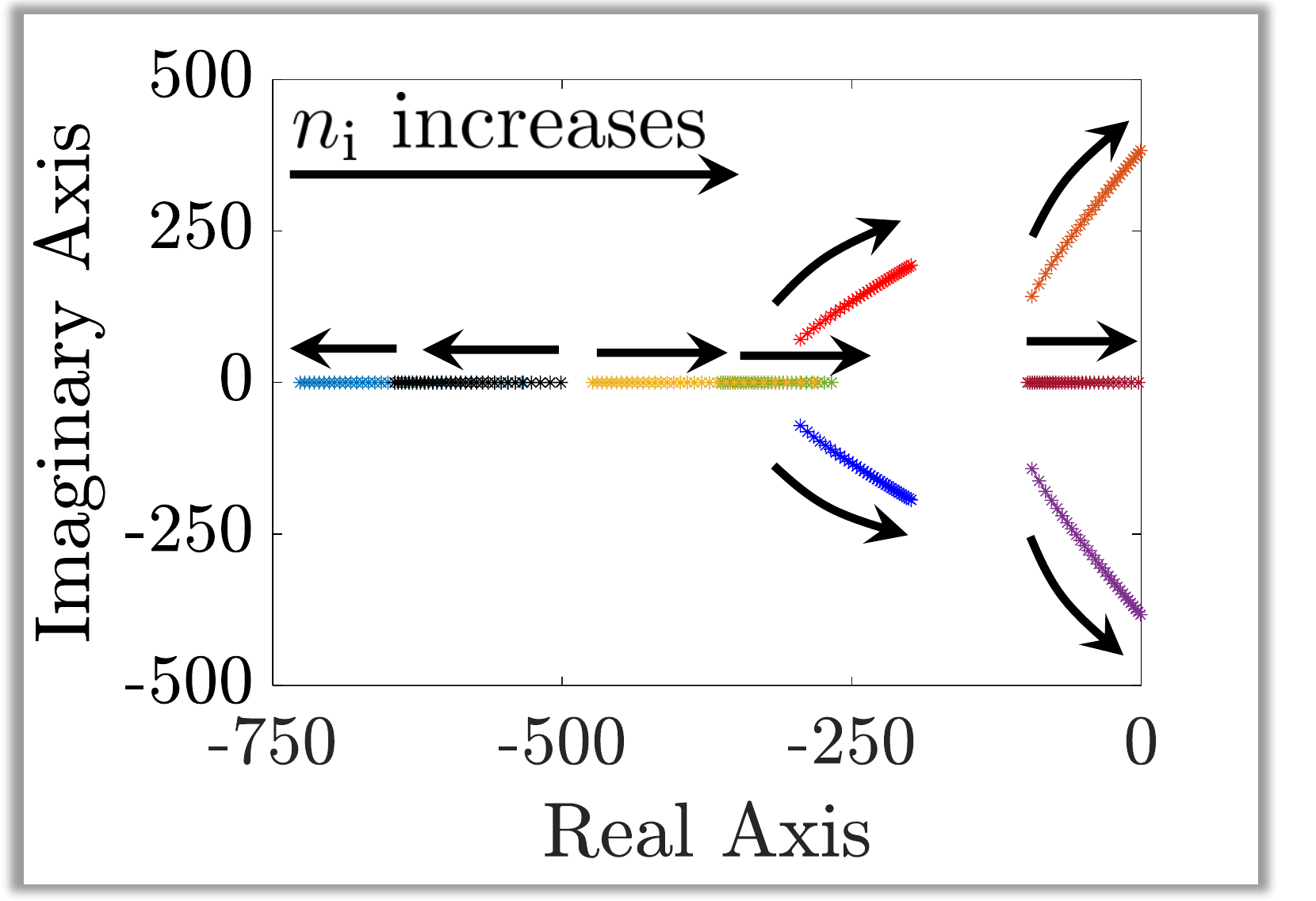}%
		\label{locusngrid}}~
	\subfloat[]{\includegraphics[scale=0.25,trim={0cm 0cm 17cm 7.5cm},clip]{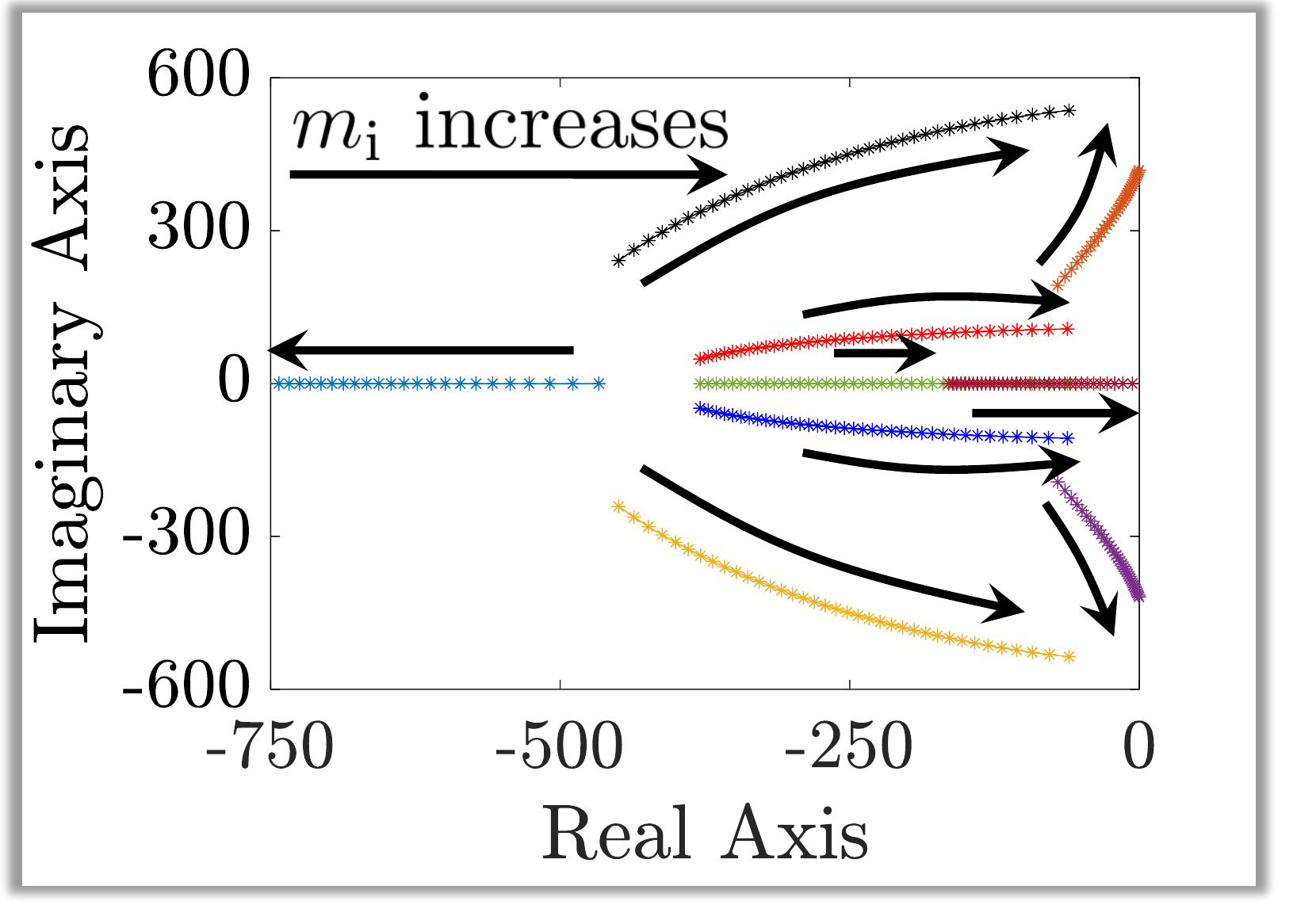}%
		\label{locusmgrid}}
		
	\subfloat[]{\includegraphics[scale=0.25,trim={0cm 0cm 17cm 7.5cm},clip]{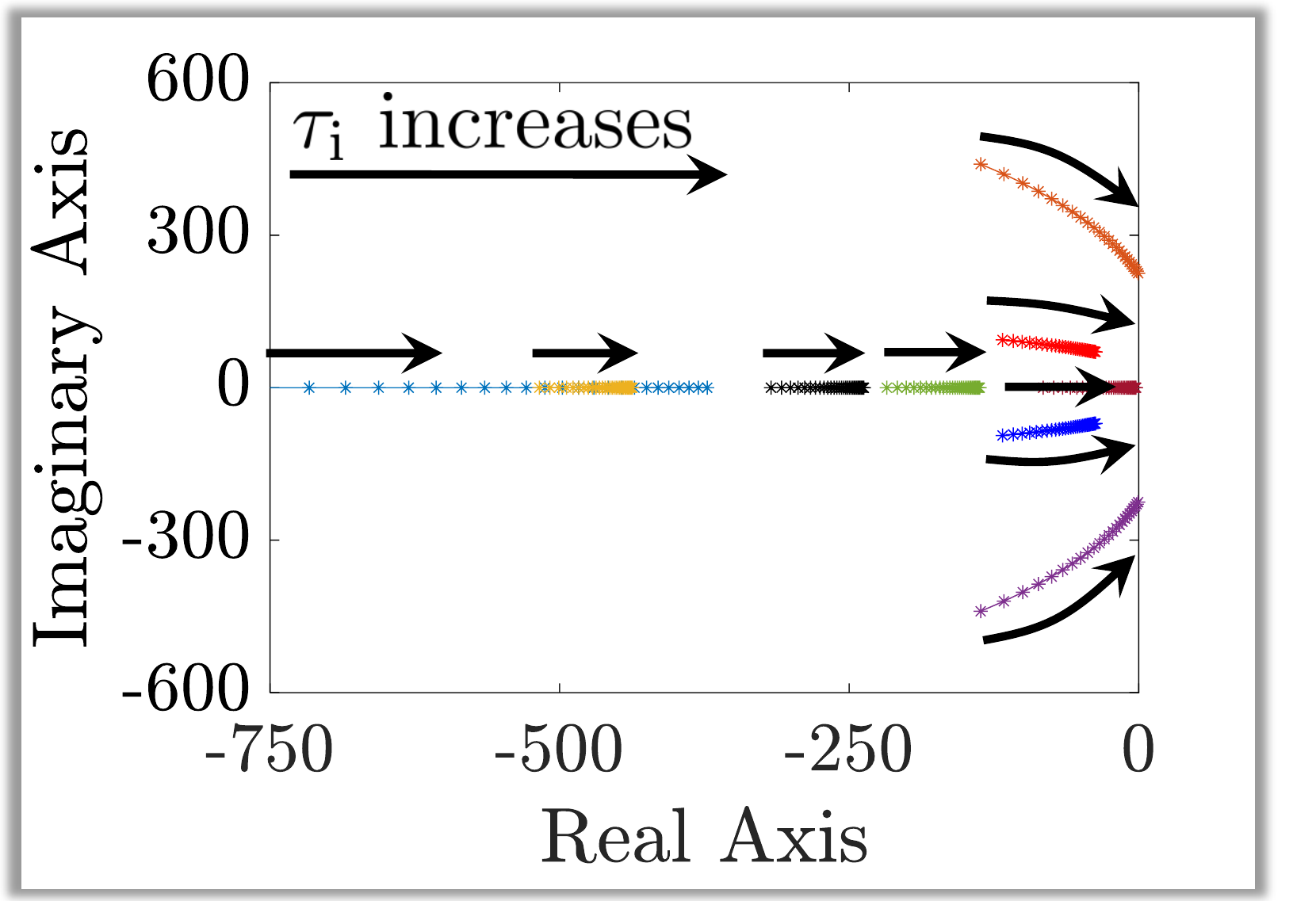}%
		\label{locustaugrid}}~
	\subfloat[]{\includegraphics[scale=0.25,trim={0cm 0cm 17cm 7.5cm},clip]{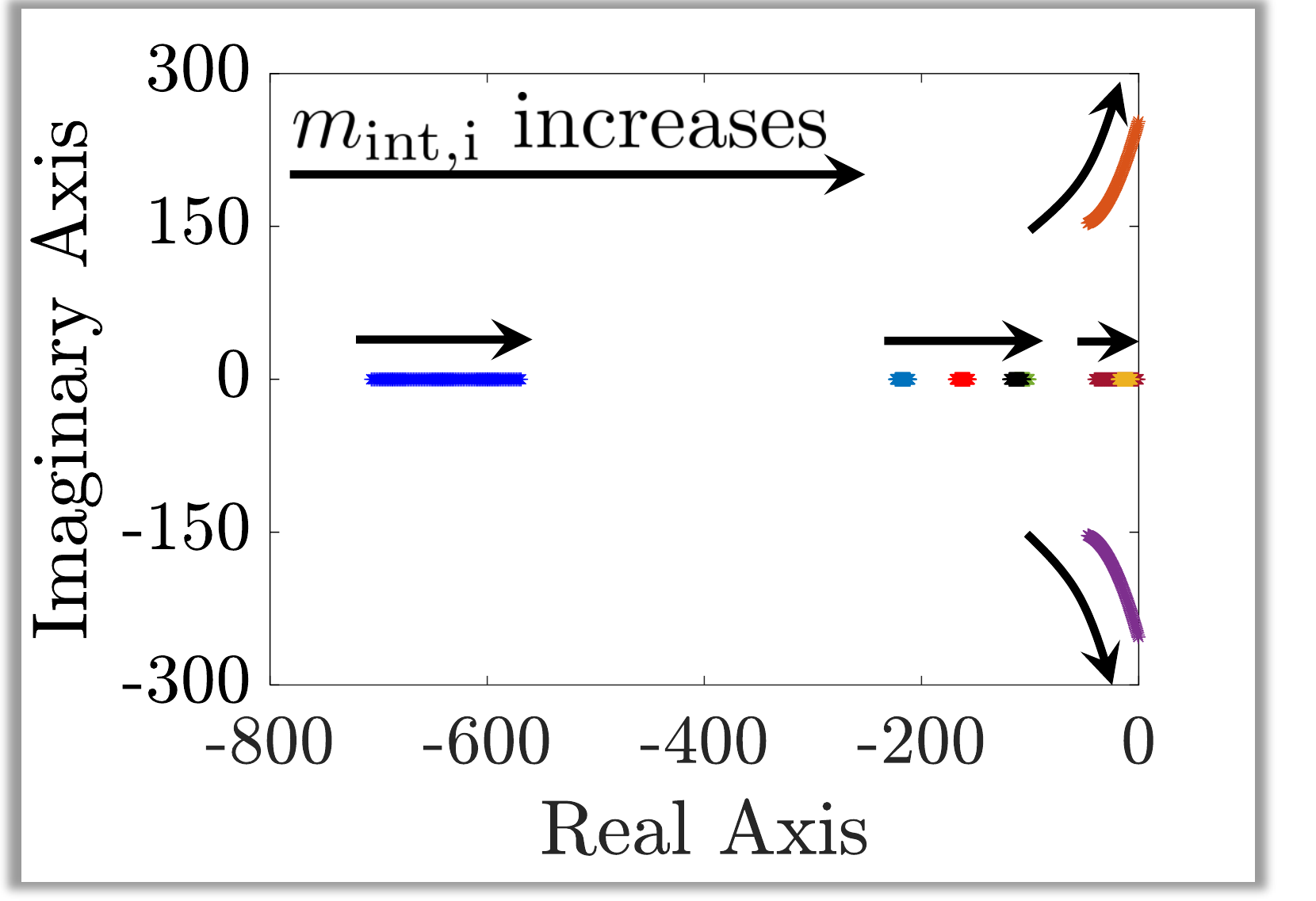}%
		\label{locusmintgrid}}
	\caption{Eigenvalue spectrum of the system in \eqref{redlinearongrid} with, (a) $1.04\times 10^{-2}\le n_\mathrm{i} \le 3.64\times 10^{-2}$~rad/s/kW, (b) $41.67\times 10^{-3} \le m_\mathrm{i} \le 416.6\times 10^{-3}$~volt/kVAr, (c) $24.7 \le \tau_\mathrm{S,i} \le 41.2$~ms, (d) $0.6 \le m_\mathrm{int,i} \le 0.84$~volt/s/kVAr.}
	\label{fig:locusgrid}
\end{figure}
\begin{figure}[t]
	\centering
	\subfloat[]{\includegraphics[scale=0.25,trim={0cm 0cm 17cm 7.5cm},clip]{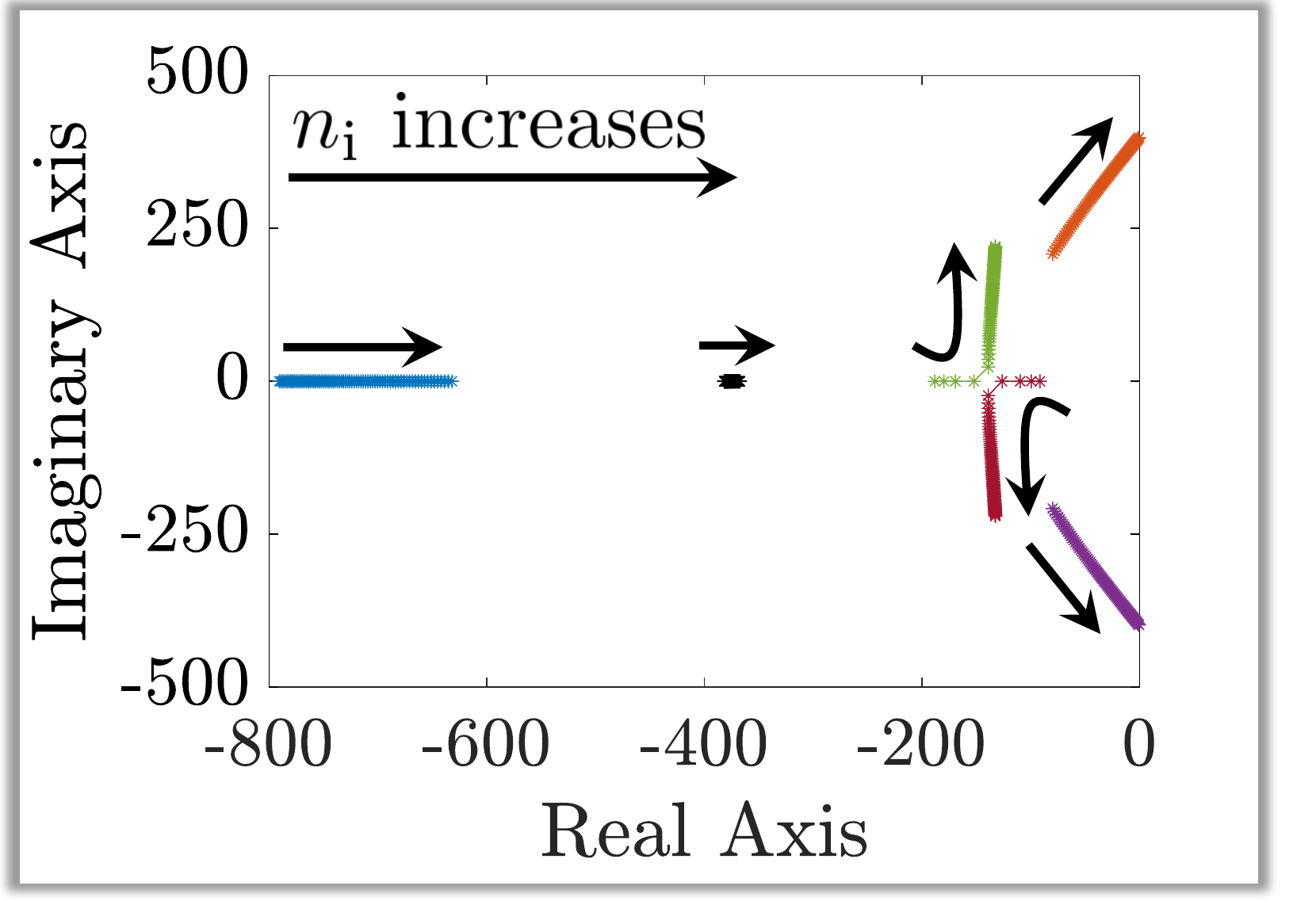}%
		\label{locusnisland}}~
	\subfloat[]{\includegraphics[scale=0.25,trim={0cm 0cm 17cm 7.5cm},clip]{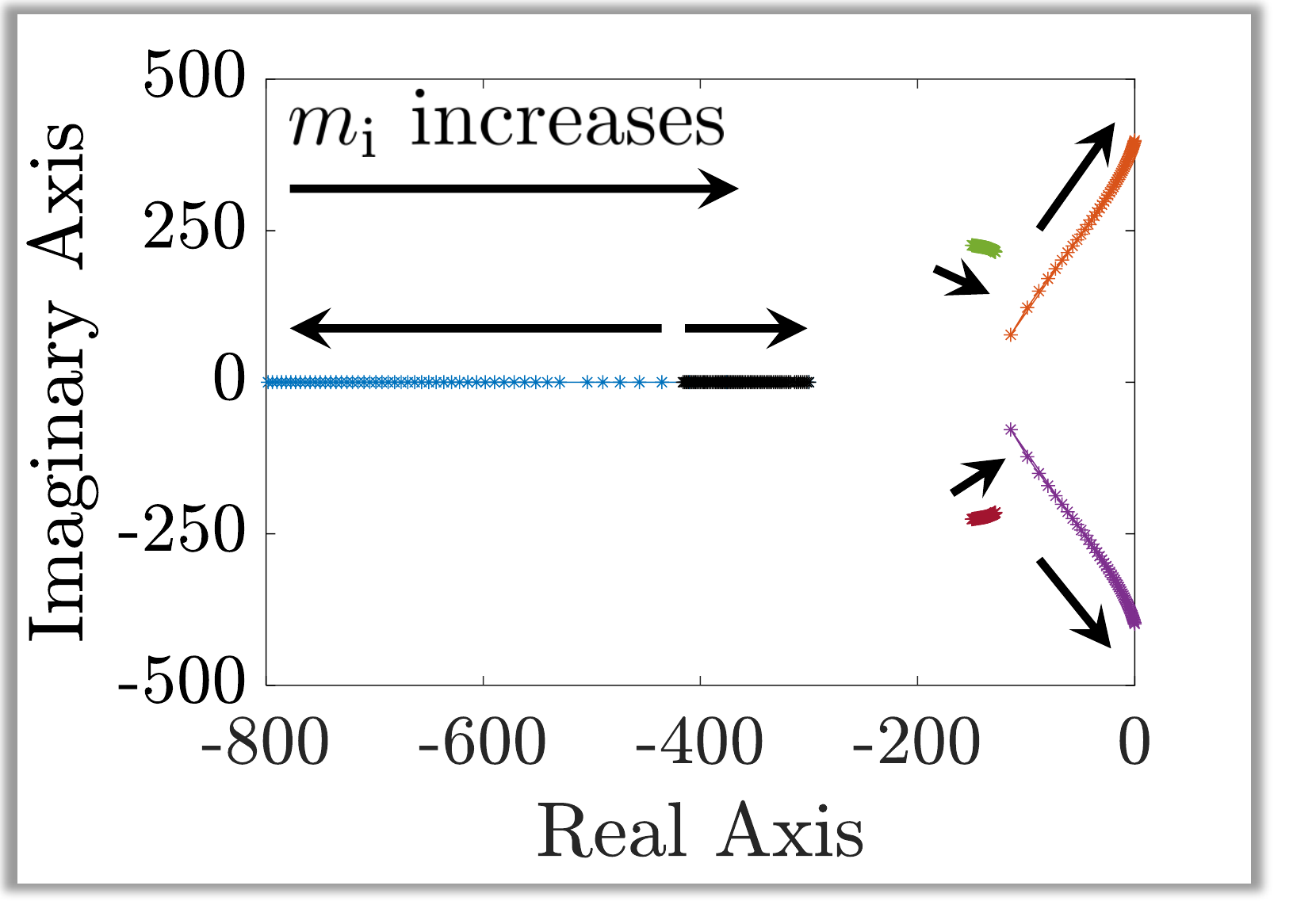}%
		\label{locusmisland}}
		
	\subfloat[]{\includegraphics[scale=0.25,trim={0cm 0cm 17cm 7.5cm},clip]{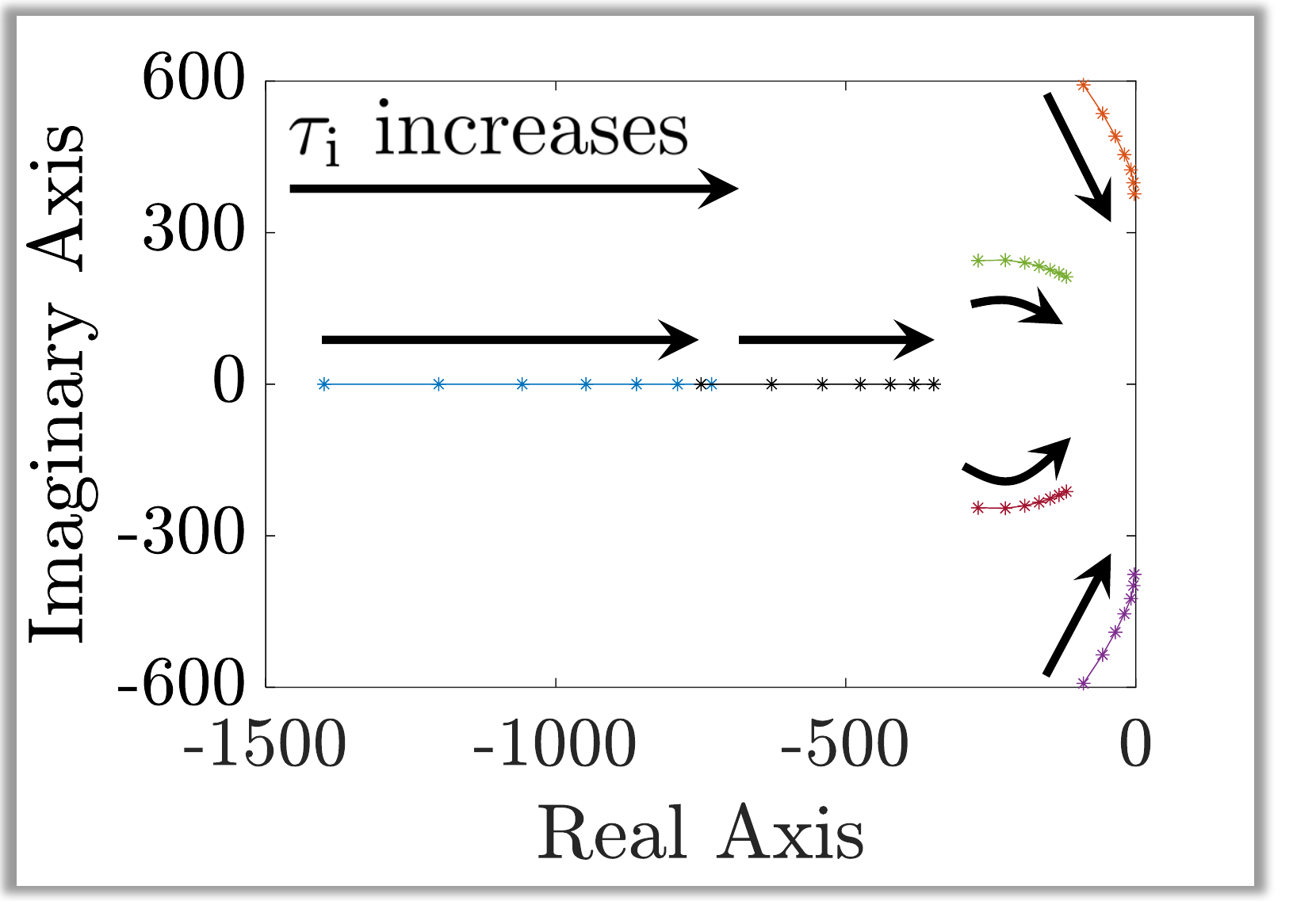}%
		\label{locustauisland}}~
	\subfloat[]{\includegraphics[scale=0.25,trim={0cm 0cm 17cm 7.5cm},clip]{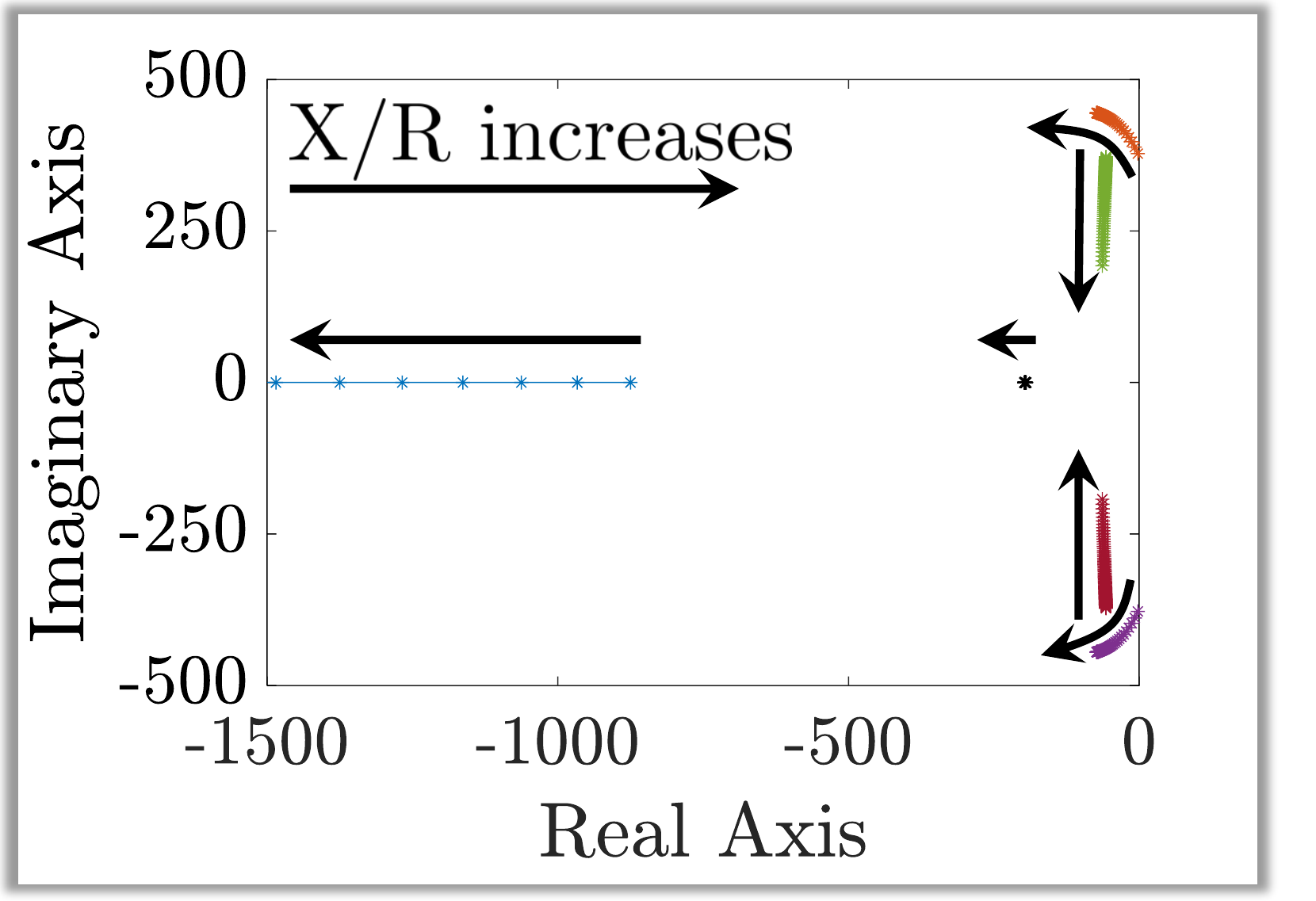}%
		\label{locusxbyrisland}}
	\caption{Eigenvalue spectrum of the system in \eqref{redlinearoffgrid} with, (a) $1.04\times 10^{-2}\le n_\mathrm{i} \le 3.64\times 10^{-2}$~rad/s/kW, (b) $41.67\times 10^{-3} \le m_\mathrm{i} \le 416.6\times 10^{-3}$~volt/kVAr, (c) $24.7 \le \tau_\mathrm{S,i} \le 41.2$~ms, (d) $0.33 \le X/R \le 3$.}
	\label{fig:locusisland}
\end{figure}
\subsection{System operating in off-grid mode}
While the system of Fig.~\ref{fig:system} is operating in off-grid mode, i.e. $\mathcal{I}_\mathrm{gs}=0$, it can be modelled as following non-linear system:
\begin{align}\label{offgridmodel}
    \dfrac{\mathrm{d}\underline{x}_\mathrm{off}}{\mathrm{d}t}=\underline{\mathcal{G}}_\mathrm{off}(\underline{x}_\mathrm{off}),
\end{align}
where, $\underline{\mathcal{G}}_\mathrm{off}(.)$ consists of \eqref{thetaID0}, \eqref{freqID0a}, \eqref{freqID0b} \eqref{voltID0a}, \eqref{voltID0b}, \eqref{ida}, \eqref{idb} for $\mathrm{i}$=$1,2$ and \eqref{load} with $\mathcal{I}_\mathrm{gs}=0$. Here, $\underline{x}_\mathrm{off}^\top=[\theta_\mathrm{1}~\theta_\mathrm{2}~\omega_\mathrm{r,1}~\omega_\mathrm{r,2}~V_\mathrm{r,1}~V_\mathrm{r,2}~i_\mathrm{o,1}^\mathrm{d}~i_\mathrm{o,2}^\mathrm{d}~i_\mathrm{o,1}^\mathrm{q}~i_\mathrm{o,2}^\mathrm{q}]$. As a result, the complete system in off-grid mode is a $10$-order nonlinear autonomous system with \textit{state} vector $\underline{x}_\mathrm{off}$. It is to be noted here that the system in off-grid mode is similar to islanded microgrid with two \textit{conventional}-droop controlled VSIs in order to share the common load at the PCC. The following two remarks are made before proceeding to the stability analysis of the system.
\begin{remark}\label{remark4}
In steady-state, \eqref{thetaID0} and \eqref{freqID0a} for $\mathrm{i}=1,2$ yield $\omega_\mathrm{r,1}=\omega_\mathrm{r,2}=\omega_\mathrm{PCC}$ which implies $n_\mathrm{1}P_\mathrm{1}=n_\mathrm{2}P_\mathrm{2}$ if and only if $n_\mathrm{1}P_\mathrm{ref,1}=n_\mathrm{2}P_\mathrm{ref,2}$. This is aligned with the conventional findings of droop controlled microgrids in literature \cite{chandorkar}.
\end{remark}
\begin{remark}\label{remark5}
Moreover, it is also common that the reactive power sharing accuracy with \textit{conventional}-droop control is strongly affected by line parameters \cite{gurrero1}. In light of this, if line parameters are neglected then \eqref{ida} and \eqref{idb} yield $\theta_\mathrm{1}=\theta_\mathrm{2}=0$ which implies $V_\mathrm{r,1}=V_\mathrm{r,2}$. Hence, from \eqref{voltID0a}, it can be stated that $m_\mathrm{1}Q_\mathrm{1}=m_\mathrm{2}Q_\mathrm{2}$ \textit{iff} $m_\mathrm{1}Q_\mathrm{ref,1}=m_\mathrm{2}Q_\mathrm{ref,2}$ \cite{chandorkar}.
\end{remark}
\begin{remark}
In conclusion with Remark~\ref{remark2}, \ref{remark3}, \ref{remark4}, and \ref{remark5}, it can be claimed that the proposed control for SEPSS, a seamless power flow can be achieved for CI. In normal on-grid mode, the grid alone supplies the total loads of CI. In case of emergency, the CI is transferred to off-grid mode seamlessly and rapidly and battery-fed VSIs are responsible to provide stable voltage and frequency with controlled power sharing. 
\end{remark}
Similarly, the problem for analyzing stability of the system in off-grid mode can be formulated as follows:
\begin{problem}\label{problem2}
Consider the system of \eqref{offgridmodel} linearized around an equilibrium point, $\underline{x}_\mathrm{off}^\mathrm{eq}$
\begin{align}\label{linearoffgrid}
    \dfrac{\mathrm{d}\Delta \underline{x}_\mathrm{off}}{\mathrm{d}t}=\underbrace{\mathcal{\underline{F}}_\mathrm{off}(\underline{x}_\mathrm{off})\big|_\mathrm{\underline{x}_\mathrm{off}^\mathrm{eq}}}_{\mathbf{A}_\mathrm{{off}}}\Delta \underline{x}_\mathrm{off},
\end{align}
where $\mathbf{A}_\mathrm{off}$ is the Jacobian matrix of $\mathcal{\underline{F}}_\mathrm{off}(.)$, the vector field of \eqref{offgridmodel}. Develop a systematic framework for tuning the control parameters i.e. $n_\mathrm{1}$, $n_\mathrm{2}$, $m_\mathrm{1}$, $m_\mathrm{2}$, $\tau_\mathrm{1}$, $\tau_\mathrm{2}$, so that asymptotic stability of the equilibrium of \eqref{linearoffgrid} is guaranteed.
\end{problem}
In similar way by suppressing the EM dynamics of the lines and loads with the assumption of reaching the fast states i.e. $\underline{x}_\mathrm{off}^\mathrm{f}:=[i_\mathrm{o,1}^\mathrm{d}~i_\mathrm{o,2}^\mathrm{d}~i_\mathrm{o,1}^\mathrm{q}~i_\mathrm{o,2}^\mathrm{q}]^\top$ to a  quasi-steady-state values (i.e. $\underline{x}_\mathrm{off,ss}^\mathrm{f}$), the following reduced-order model of the system \eqref{linearoffgrid} with slow states, $\underline{x}_\mathrm{off}^\mathrm{s}:=[\theta_\mathrm{1}~\theta_\mathrm{2}~\omega_\mathrm{r,1}~\omega_\mathrm{r,2}~V_\mathrm{r,1}~V_\mathrm{r,2}]^\top$ can be formulated (more elaborated discussion in Appendix~B):
\begin{align}\label{redlinearoffgrid}
        \dfrac{\mathrm{d}\Delta \underline{x}_\mathrm{off}^\mathrm{s}}{\mathrm{d}t}=\mathbf{A}_\mathrm{\textbf{off}}^\mathrm{\textbf{s}}\Delta \underline{x}_\mathrm{off}^\mathrm{s}.
\end{align}
\renewcommand{\arraystretch}{1.2}
\begin{table}[t]
\centering
\caption{DOMAIN OF CONTROL PARAMETERS FOR $\mathrm{i}=1,2$}
\label{table:control}
\begin{tabular}{|c|c|c|}
\hline 
\textbf{Parameters} & \textbf{Range} & \textbf{Selected Value}    \\ \hline \hline
$n_\mathrm{i}$ (rad/s/kW) & $[1.04~~3.64]\times 10^{-2}$ & $2.08\times 10^{-2}$  \\ \hline
$m_\mathrm{i}$ (volt/kVAr) & $[41.67~~416.6]\times 10^{-3}$ & $208.3\times 10^{-3}$ \\ \hline
$m_\mathrm{int,i}$ (volt/s/kVAr) & $[0.6~~0.84]$ & $0.67$ \\ \hline
$\tau_\mathrm{i}$ (ms) & $[24.7~~41.2]$ & $33$\\ \hline
\end{tabular}
\end{table}
With the equilibrium point of the model in \eqref{redlinearoffgrid} and the parameters indicated in Table~\ref{table:control} and \ref{table:data}, the root locus-based analysis is conducted in order to obtain a range of various control parameters, as stated in Problem~\ref{problem2} empirically. With the assumption of having homogeneous control parameters, i.e. $n_\mathrm{1}=n_\mathrm{2}$, $m_\mathrm{1}=m_\mathrm{2}$ and $\tau_\mathrm{1}=\tau_\mathrm{2}$, the complete eigenvalue spectrum of the system state matrix $\mathbf{A}^\mathrm{\textbf{s}}_\mathrm{\textbf{off}}$ are shown in Fig.~\ref{fig:locusisland}\subref{locusnisland}, \ref{fig:locusisland}\subref{locusmisland}, \ref{fig:locusisland}\subref{locustauisland} and \ref{fig:locusisland}\subref{locusxbyrisland}. It is important that the sensitivity of droop control in island mode w.r.t. the line types (i.e. $X/R$ ratio of lines) is also added in this stability study. Similar to the system with on-grid mode, it is observed here also that with the increase of these control parameters individually, some of the low-frequency eigen values gradually move away from the real axis, meanwhile close to the imaginary axis until it reach to unstable right-half zone. Similarly, some of the stable eigenvalues gradually move on the real axis only away from the imaginary axis. It is observed that the same range of common control parameters (i.e. $n_\mathrm{i}$'s, $m_\mathrm{i}$'s and $\tau_\mathrm{i}$'s for $\mathrm{i}=1, 2$), affecting the microgrid stability and dynamic performance are obtained from the critical value of the root locus as tabulated in Table~\ref{table:control}.
\subsection{System during on-grid to off-grid transition}
Apart from the steady-state operations in on- and off-grid mode, the system of Fig.~\ref{fig:system} will occasionally experience the transient operations during on- to off-grid transition and vice versa. Island detection algorithms are important aspects to be considered for on- to off-grid transition. On the other hand, grid re-synchronization algorithm is necessary for off- to on-grid transition. Although the response during off- to on-grid transition will dominantly depends on the grid behavior which is not the primary focus of this section, the analysis of the behavior of the system during on- to off-grid transition is an important task while guaranteeing seamless transition. This study will showcase the benefit of the proposed control by having seamless and rapid power flow recovery of the system under study after grid failure. This study is achieved by identifying the influencing factor of control parameters (i.e. $n_\mathrm{i}$, $m_\mathrm{i}$, $\tau_\mathrm{S,i}$) on transient response. While the system of Fig.~\ref{fig:system} is operating in on-grid mode ($\mathcal{I}_\mathrm{gs}=1$) before the transition, equivalently the system given by \eqref{ongridmodel} is operating at steady-state given by $\underline{x}_\mathrm{on}=\underline{x}_\mathrm{on}^\mathrm{eq}$. Therefore, during the transition, the system given by \eqref{offgridmodel} will have a trajectory staring from initial conditions of the states, $\underline{x}_\mathrm{off}=\underline{x}_\mathrm{off}^\mathrm{initial} \subset \underline{x}_\mathrm{on}^\mathrm{eq}$, (where $\subset$ denotes the common states between $\underline{x}_\mathrm{off}$ and $\underline{x}_\mathrm{on}$) to the steady-state values of the system in off-grid mode, given by \eqref{offgridmodel}, i.e. $\underline{x}_\mathrm{off}=\underline{x}_\mathrm{off}^\mathrm{eq}$. As evidenced from \eqref{ongridmodel} and \eqref{offgridmodel}, the system is highly non-linear in nature and as a result $\underline{x}_\mathrm{on}^\mathrm{eq}$ is determined numerically by solving $\mathcal{\underline{G}}_\mathrm{on}(\underline{x}_\mathrm{on})=0$ using Newton–Raphson method and Forward Euler method is used for solving \eqref{offgridmodel} with given initial conditions. The main objective is to identify the control parameter that has significant impact on the response time of the system during transition and it is achieved by observing the response of $\mathrm{i}^{th}$ VSI output active power, $p_\mathrm{i}=f_\mathrm{P,i}(\underline{x}_\mathrm{off})$, and reactive power, $q_\mathrm{i}=f_\mathrm{Q,i}(\underline{x}_\mathrm{off})$, given by \eqref{freqID0b} and \eqref{voltID0b} respectively. It is observed that among the control parameters, the time constant, $\tau_\mathrm{S,i}$, of the low-pass filter in the power controller has dominating effect on the transient response time during the transition as showcased in Fig.~\ref{fig:resultjump1} and Fig.~\ref{fig:resultjump2}. Clearly before the transition ($t=0.05~s$), the system given by \eqref{ongridmodel} is operating in steady-state and as a result, both $p_\mathrm{i}$ and $q_\mathrm{i}$, determined by $f_\mathrm{P,i}(\underline{x}_\mathrm{on}^\mathrm{eq})$, $f_\mathrm{Q,i}(\underline{x}_\mathrm{on}^\mathrm{eq})$ respectively, (given that no load change occurs during this transition), are zero. This result is in alignment with the Theorem~\ref{theorem1}. From $t=0.05~s$ onward, the system given by \eqref{offgridmodel} starts from the initial conditions and eventually reaches to new steady-state, $\underline{x}_\mathrm{off}^\mathrm{eq}$, after going through a transient. As a result, both $p_\mathrm{i}$ and $q_\mathrm{i}$ become constant determined by $f_\mathrm{P,i}(\underline{x}_\mathrm{off}^\mathrm{eq})$ and  $f_\mathrm{Q,i}(\underline{x}_\mathrm{off}^\mathrm{eq})$ respectively. It is observed here that the increase in time constant of power controller filter (same as Table~\ref{table:control}) results in an increase in response time of power recovery. 
\begin{figure}[t]
	\centering
    \includegraphics[scale=0.26,trim={0cm 0cm 0cm 6cm},clip]{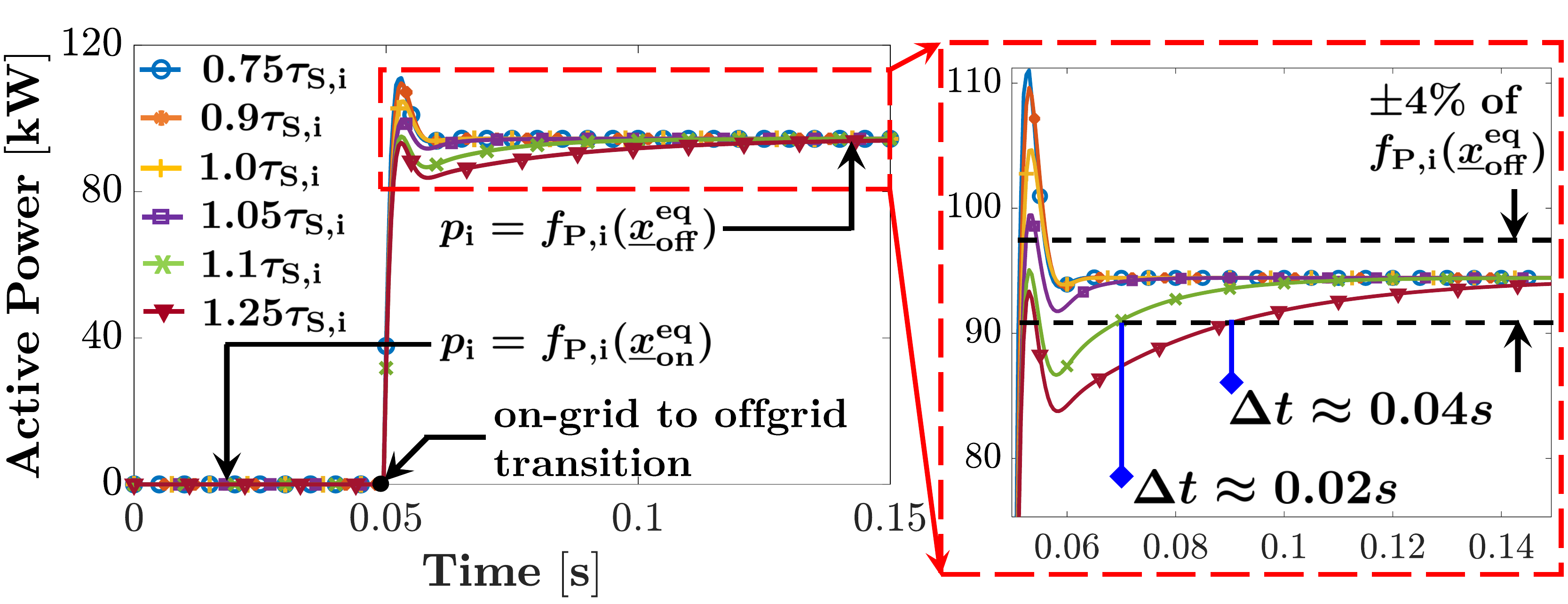}%
	\caption{Active power output of $\mathrm{i}^{th}$ VSI during on-grid to off-grid transition.}
	\label{fig:resultjump1}
\end{figure}
\begin{figure}[t]
	\centering
    \includegraphics[scale=0.26,trim={0cm 0cm 0cm 6cm},clip]{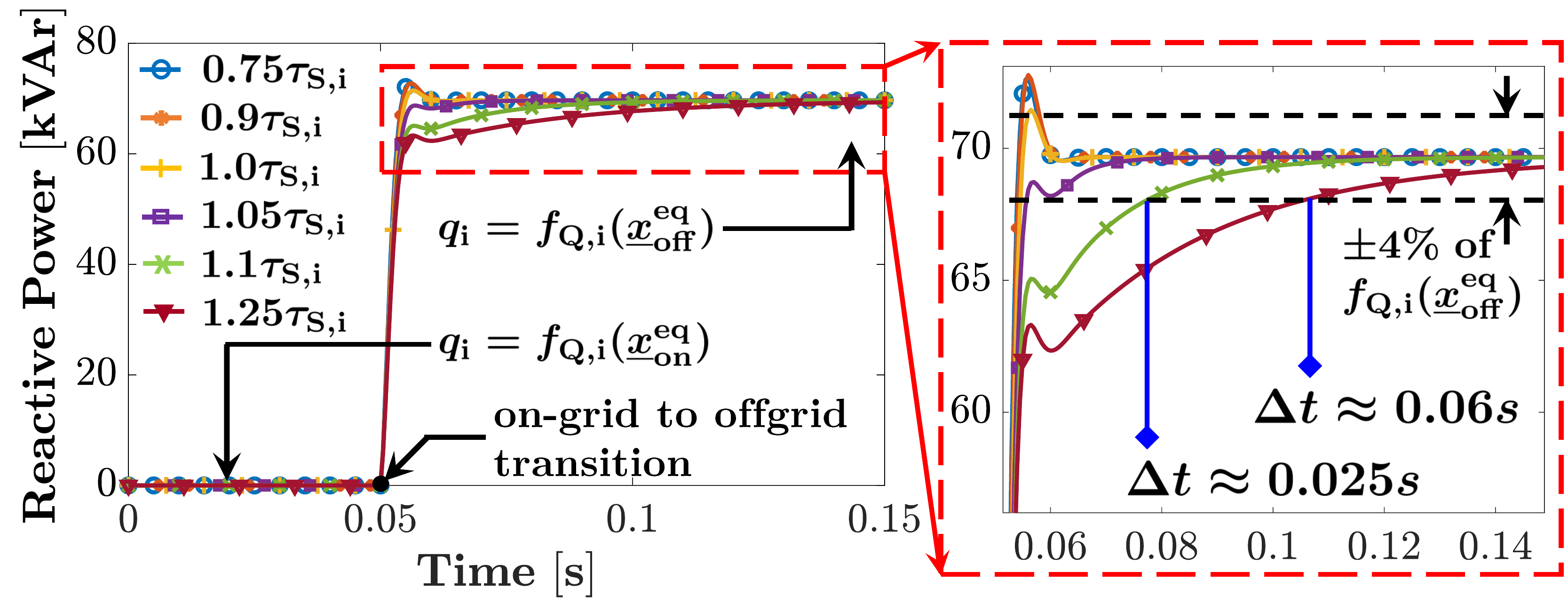}%
	\caption{Reactive power output of $\mathrm{i}^{th}$ VSI during on-grid to off-grid transition.}
	\label{fig:resultjump2}
\end{figure}
%\subsubsection{System during Transition~B of Fig.~\ref{fig:jump}}
%Similarly, while the system of Fig.~\ref{fig:system} is operating on off-grid mode ($\mathcal{I}_\mathrm{gs}$=0), equivalently the system given by \eqref{offgridmodel} is operating at steady-state given by $\underline{x}_\mathrm{off}=\underline{x}_\mathrm{off}^\mathrm{eq}$ before the transition. Therefore, during Transition~B, the system given by \eqref{ongridmodel} will have a trajectory staring from initial conditions of the states, $\underline{x}_\mathrm{on}=\underline{x}_\mathrm{on}^\mathrm{initial}$ (consists of $\underline{x}_\mathrm{off}^\mathrm{eq}$ and additional $\theta_\mathrm{g}$, $i^\mathrm{d}_\mathrm{g}$, $i^\mathrm{q}_\mathrm{g}$,  $\psi^\mathrm{Q}_\mathrm{1}$ and $\psi^\mathrm{Q}_\mathrm{2}$) to the steady-state values of system in on-grid mode, given by \eqref{ongridmodel}, i.e. $\underline{x}_\mathrm{on}=\underline{x}_\mathrm{on}^\mathrm{eq}$. Similarly, $\underline{x}_\mathrm{off}^\mathrm{eq}$ is determined numerically by solving $\mathcal{\underline{G}}_\mathrm{off}=0$ using Newton–Raphson method and Forward Euler method is used for solving \eqref{ongridmodel} with given initial conditions. Objective here is to identify the control parameter (i.e. $n_\mathrm{i}$, $m_\mathrm{i}$, $m^\mathrm{Q}_\mathrm{i}$, $\tau_\mathrm{S,i}$) that effects the response time of $\mathrm{i}^{th}$ VSI output active power, $p_\mathrm{i}=f(\underline{x}_\mathrm{on})$ given by \eqref{freqID1b}.
\section{Experimental Results and Verification}\label{result}
\begin{figure}[t]
	\centering
    \includegraphics[scale=0.26,trim={0cm 0cm 0cm 0cm},clip]{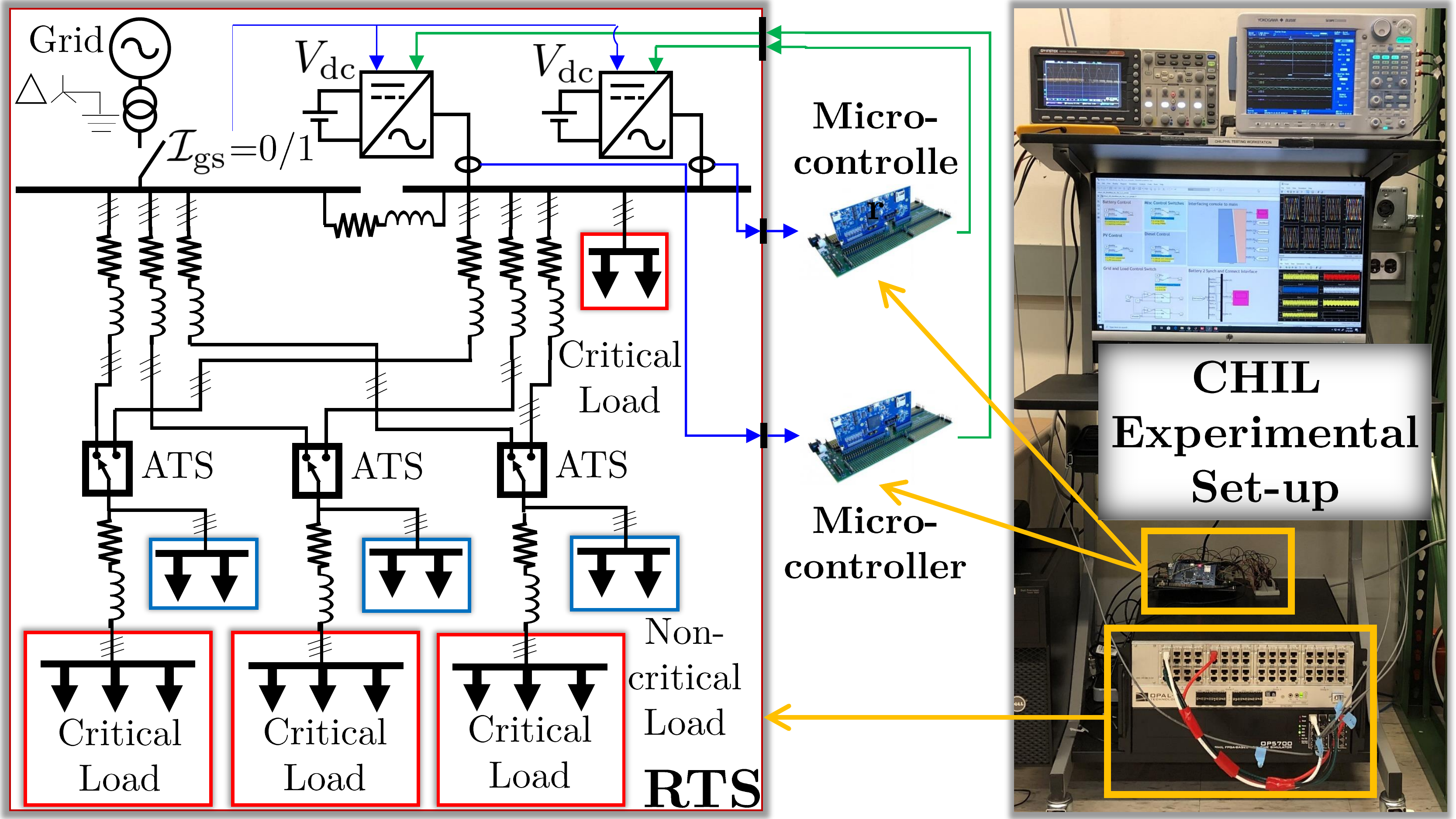}%
	\caption{RTS based controller-hardware-in-the-loop simulation platform.}
	\label{fig:setup}
\end{figure}
\renewcommand{\arraystretch}{1.2}
\begin{table}[t]
\centering
\caption{$3$-PHASE VSI AND NETWORK PARAMETERS UNDER STUDY}
\label{table:data}
\begin{tabular}{|c|c|}
\hline 
\textbf{VSI} & \textbf{Value}    \\ \hline \hline
Ratings ($3$-$\phi$) & $480$~V (L-L), $60$~Hz, $120$~kVA, $0.85$~pf \\ \hline
VSI Parameters & $V_\mathrm{dc}$ = $1000$~V, $f_\mathrm{sw}$ = $10$~kHz \\ \hline
Filter Parameters & $L_\mathrm{f}$ = $150$~$\mu$H, $L_\mathrm{g}$ = $15$~$\mu$H, $C_\mathrm{f}$ = $110~\mu$F \\ \hline
Line Parameter & $R_\mathrm{Line,i}=0.55$~m$\Omega$, $L_\mathrm{Line,i}=0.2$~mH,~$\mathrm{i}=1,2$  \\ \hline \hline \hline
\textbf{Network} & \textbf{Value}    \\ \hline \hline
Grid Rating ($3$-$\phi$) &  $13.8$~kV fed $\mathrm{Y}$-$\mathrm{Y}$ $0.75$~MVA, $13.8/0.48$~kV Xr. \\ \hline
Line Parameter & $R_\mathrm{l,g}=5$~m$\Omega$, $L_\mathrm{l,g}=30$~$\mu$H \\ \hline
\end{tabular}
\end{table}
\begin{figure*}[t]
	\centering
    \includegraphics[scale=0.55,trim={0cm 0cm 3cm 12cm},clip]{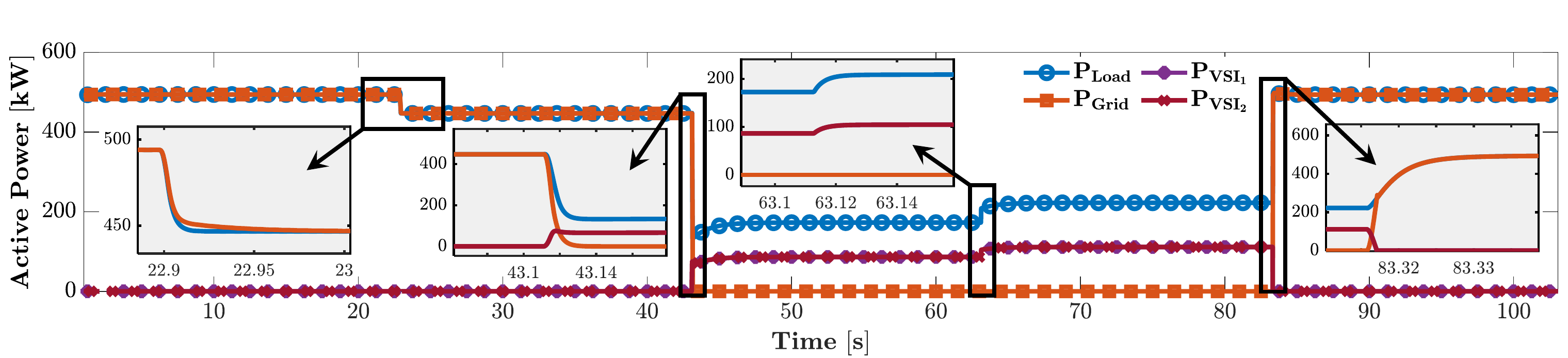}%
	\caption{Profile of active power flow in critical infrastructure of Fig.~\ref{fig:setup} during the sequence of events considered in test case.}
	\label{fig:P}
\end{figure*}
\begin{figure*}[t]
	\centering
    \includegraphics[scale=0.55,trim={0cm 0cm 3cm 12cm},clip]{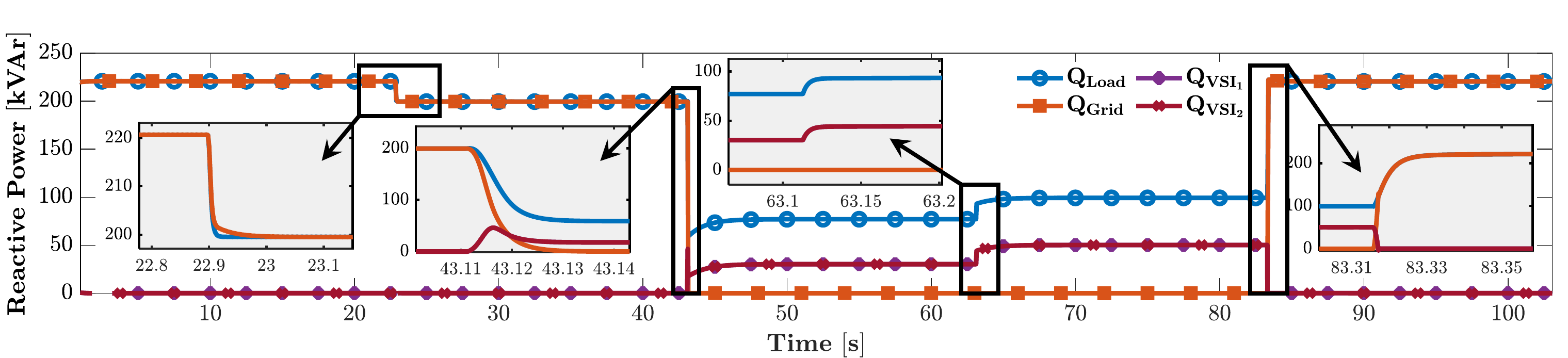}%
	\caption{Profile of reactive power flow in critical infrastructure of Fig.~\ref{fig:setup} during the sequence of events considered in test case.}
	\label{fig:Q}
\end{figure*}
\begin{figure*}[t]
	\centering
    \includegraphics[scale=0.55,trim={0cm 0cm 3cm 12cm},clip]{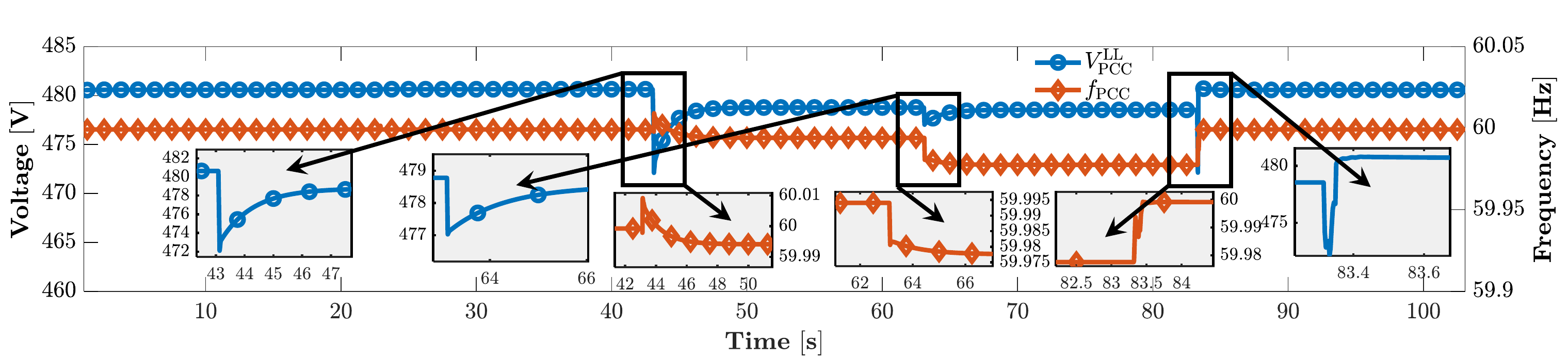}%
	\caption{Profile of voltage RMS (L-L) and frequency of critical infrastructure of Fig.~\ref{fig:setup} during the sequence of events considered in test case.}
	\label{fig:VF}
\end{figure*}
In order to substantiate the proposed strategy of seamless power recovery for CIs, controller hardware-in-the-loop (CHIL)-based real-time (RT) simulation studies are conducted as shown in Fig.~\ref{fig:setup}. The test-system of CI under study is developed based on the electrical network of M-Health Fairview, University of Minnesota Medical Center with certain relaxations and modifications \cite{umn}. It has a peak load of $500$kW distributed as non-critical ($300$kW) and critical loads ($200$kW). The parameters are tabulated in Table~\ref{table:data}. The network along with power circuits of two VSIs (parameters tabulated in Table~\ref{table:data}) is emulated using eMEGASIM platform inside the OP$5700$ RT-simulator (RTS) manufactured by OPAL-RT. The control loops of both the VSIs are realized on two low-cost Texas-Instruments TMS$28379$D, $16$/$12$-bit floating-point $200$~MHz Delfino micro-controller boards. The control parameters used for validation are tabulated in Table~\ref{table:control}. 
\par In order to validate the proposed control architecture, a test case is examined by conducting a sequence of events in the RT platform. The CI is initially operating in on-grid mode supplying both critical and non-critical loads with both the VSIs synchronized and connected to the network. At $t\approx22.91$s, there is a load drop in CI and subsequently at $t\approx43.18$s, CI is disconnected from the grid and is operating in off-grid mode until $t\approx83.32$s when the grid returns. In off-grid mode, both VSIs are supplying the critical loads with a load jump at $t\approx63.11$s. Moreover, once the grid comes back \cite{gridsync}, both non-critical and critical loads of CI are supplied by the grid again. 
\par Fig.~\ref{fig:P} and \ref{fig:Q} illustrate the active and reactive power flow in the considered test case. Fig.~\ref{fig:iB1}, \ref{fig:iB2} and \ref{fig:iG} illustrate the three-phase instantaneous current waveform supplied by the VSI $1$ and $2$ and the grid respectively. It is observed here that even if both the VSIs are connected to the network, during on-grid mode, the output active and reactive power of VSIs (i.e. $\mathrm{P}_\mathrm{VSI_\mathrm{1}}$, $\mathrm{P}_\mathrm{VSI_\mathrm{2}}$, $\mathrm{Q}_\mathrm{VSI_\mathrm{1}}$ and $\mathrm{Q}_\mathrm{VSI_\mathrm{2}}$) are zero until $t\approx43.18$s when CI is disconnected from grid and operating in off-grid mode. The complete active and reactive power demand of the CI (i.e. $\mathrm{P}_\mathrm{Load}$, $\mathrm{Q}_\mathrm{Load}$) is supplied by the grid ($\mathrm{P}_\mathrm{Grid}$, $\mathrm{Q}_\mathrm{Grid}$) alone. Therefore it can be stated that the proposed droop controller is making sure not to utilize VSIs while grid is available in order to supply both non-critical and critical load of the amount of $500$kW and $220$kVAr. Once the CI is disconnected from grid and operating in off-grid mode, only critical loads are served via both VSIs which is $250$kW and $100$kVAr at peak. It is observed that both the VSIs are equally sharing the total active and reactive power demand of the critical loads of the CI as their droop characteristics are designed to share the load demand symmetrically. Moreover, at $t\approx83.32$s when grid returns, the total critical and non-critical loads of the CI is again served by the grid alone and output active and reactive power of the VSIs become zero. 
\par During each transient event, the transient performance of both active and reactive power curves are also provided in Fig.~\ref{fig:P} and \ref{fig:Q}. The selection of droop coefficients in the experiment makes sure to provide large damping and less oscillatory behavior in power transients which results a significantly fast transient behavior in order to reach steady state ($\le 0.1$s) as studied in the eigen analysis of the system. Two events are required to be analyzed separately here and these are the on-grid/off-grid and off-grid/on-grid transitions. During on-grid to off-grid transition, the seamless power flow is achieved as both the VSIs are always connected to the system and requirement of synchronisation issues are eliminated. There is no time of electrical separation of the CI network and the power flow recovery is reaching to steady-state within $0.2$s as shown in Fig.~\ref{fig:P} and \ref{fig:Q}. In similar way, when the grid returns the control architecture will forcefully make output active and reactive power of both VSIs to drop down to zero and this is achieved within $0.2$s as evidenced in Fig.~\ref{fig:P} and \ref{fig:Q}. 
\par Fig.~\ref{fig:VF} illustrates the voltage RMS and frequency at PCC of the CI during these sequence of events with magnified plot during transients. It is observed that the voltage RMS during the events are within $\pm 10$V which is around $2\%$ of nominal value. Moreover, the frequency of the network also stays within $\pm 0.03$Hz. The instantaneous three-phase voltage waveform (L-N) at the PCC of the considered CI during the sequence of events is illustrated in Fig.~\ref{fig:vNET}. Along with the steady-state waveform, the profile of instantaneous voltage waveform during each transients are also smooth and remain under the CBEMA/ITIC curve as reported in IEEE~$446$ \cite{ieee446}. It is obvious that while the CI is in on-grid mode, the frequency will be dictated by the stiff grid. However, during off-grid mode, the VSIs are responsible to hold a stable frequency as well as voltage in the network. It is observed that the voltage and frequency of the network of CI are always under the permissible range of voltage and frequency regulation limits recommended in accredited standard IEEE~$446$ \cite{ieee446}. 
\begin{figure}[t]
	\centering
    \includegraphics[scale=0.27,trim={0cm 0cm 3cm 7cm},clip]{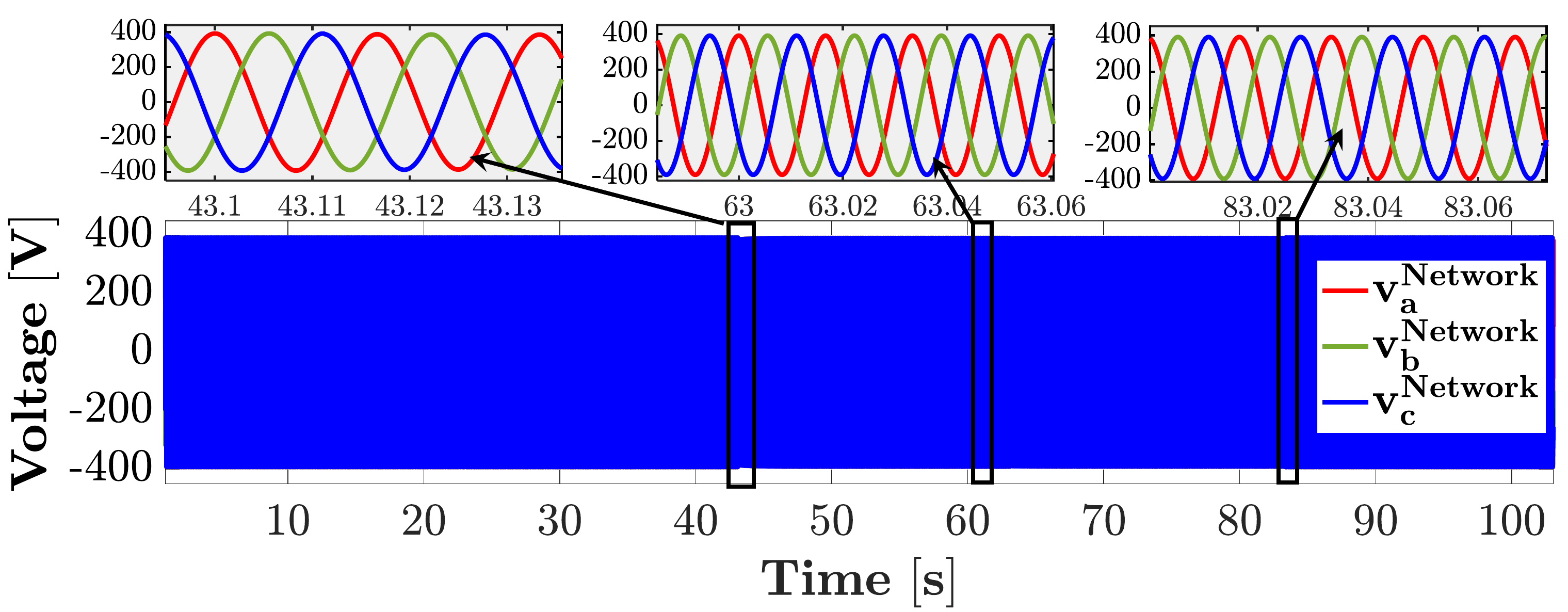}%
	\caption{Instantaneous three-phase voltage waveform (L-N) measured at the PCC of the CI during the sequence of event.}
	\label{fig:vNET}
\end{figure}
\begin{figure}[t]
	\centering
    \includegraphics[scale=0.27,trim={0cm 0cm 3cm 7cm},clip]{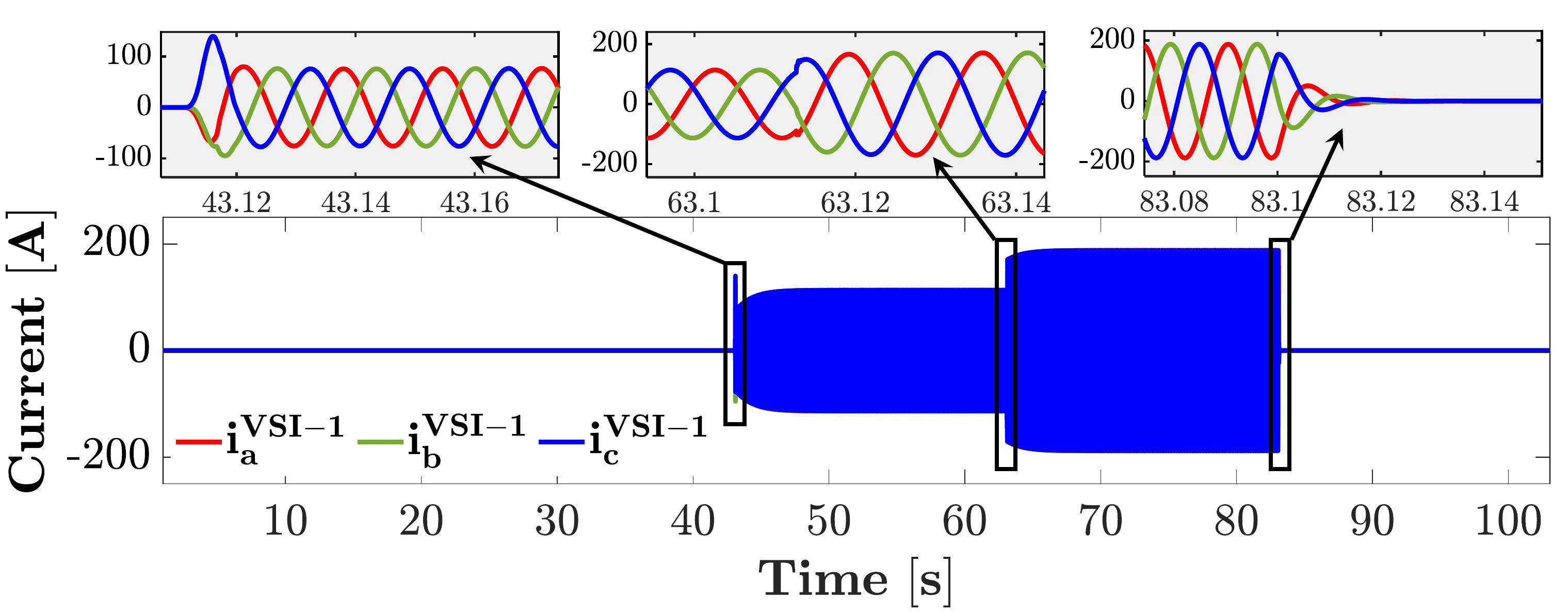}%
	\caption{Instantaneous three-phase current waveform supplied by VSI $1$ of the CI during the sequence of event.}
	\label{fig:iB1}
\end{figure}
\begin{figure}[t]
	\centering
    \includegraphics[scale=0.27,trim={0cm 0cm 3cm 7cm},clip]{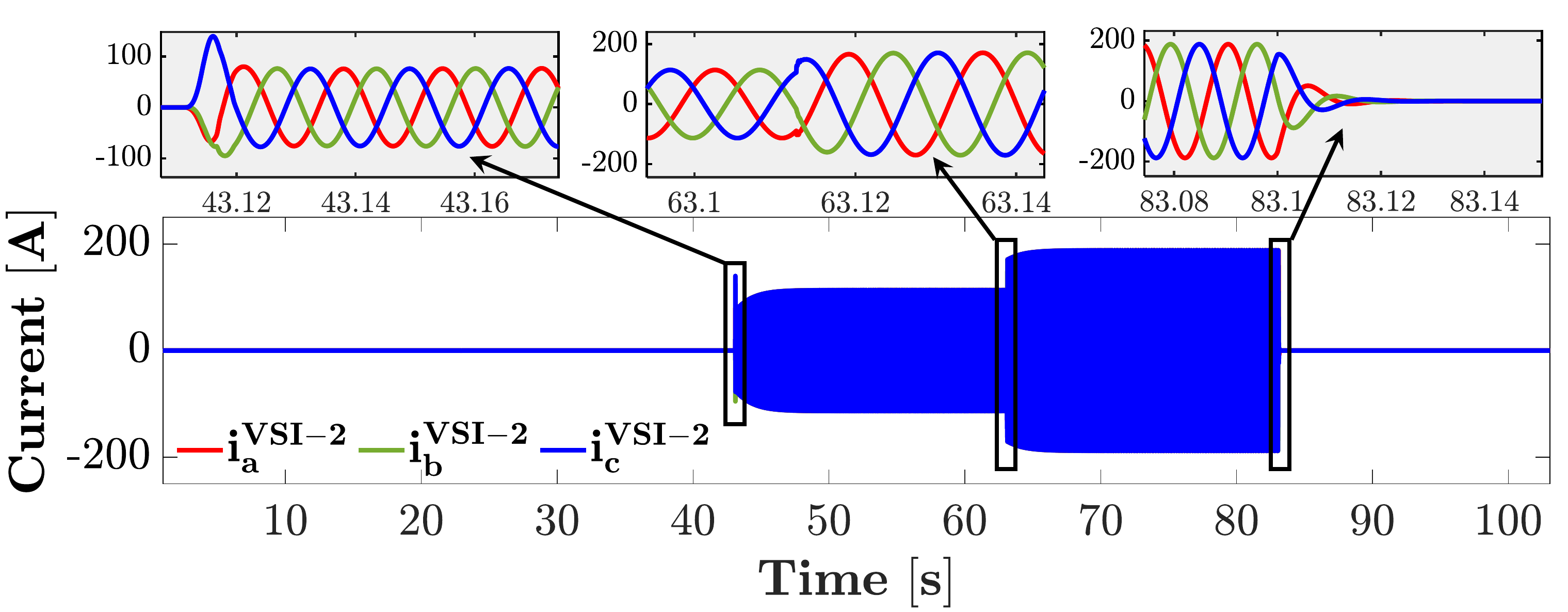}%
	\caption{Instantaneous three-phase current waveform supplied by VSI $2$ of the CI during the sequence of event.}
	\label{fig:iB2}
\end{figure}
\begin{figure}[t]
	\centering
    \includegraphics[scale=0.27,trim={0cm 0cm 3cm 7cm},clip]{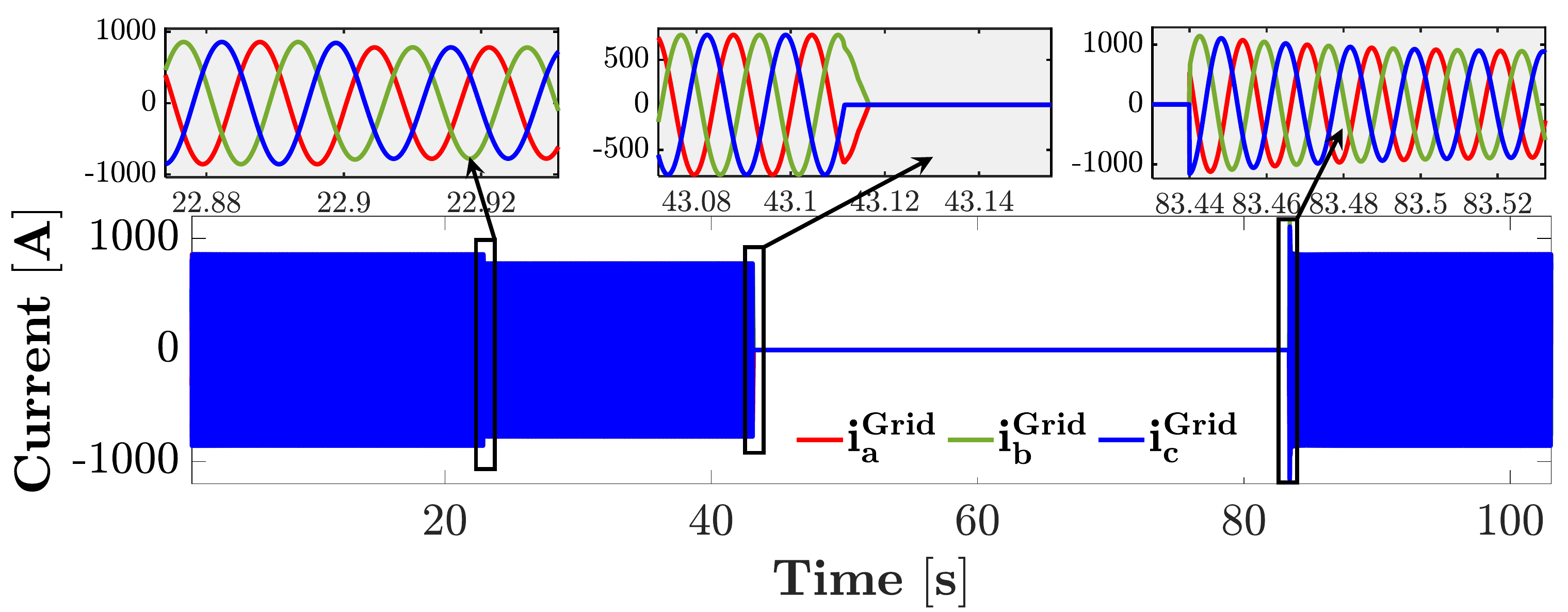}%
	\caption{Instantaneous three-phase current waveform supplied by grid of the CI during the sequence of event.}
	\label{fig:iG}
\end{figure}
\section{Conclusion}\label{conclusion}
In this paper, a rapid and seamless power recovery strategy for SEPSS of CI by a mode-dependent droop control-based VSI is proposed. With this proposed droop control it is shown, both analytically and experimentally, that none of the VSIs supplied power to the network while operating in on-grid mode and entire load of CI is supplied alone by the grid as desired. Whereas, during off-grid mode both the VSIs share the required demand among themselves with a seamless transition from on-grid mode following a grid failure. These two accomplishments are in alignment with the synopsis of the SEPSS for any CI. Moreover, stability of such control in system under study is guaranteed by systematic modeling of the system. Viability of the proposed control is corroborated by controller hardware-in-the-loop-based real-time simulation studies on a realistic CI electrical network based on a healthcare facility.
\section*{APPENDIX}
\subsection*{Appendix~A: [Proof of Theorem \ref{theorem1}]}
\begin{proof}
While the system, shown in Fig.~\ref{fig:system}, is operating in on-grid mode, i.e. $\mathcal{I}_\mathrm{gs}$=$1$ and grid is assumed to be at nominal (inferred from Assumption~\ref{assumption3}), the complete system model can be formulated by \eqref{ongridmodel}. Therefore, in steady-state, 
\begin{align*}
    \underline{\mathcal{G}}_\mathrm{on}(\underline{x}_\mathrm{on})\Big|_{\underline{x}_\mathrm{on}^\mathrm{eq}}=0.
\end{align*}
Hence, \eqref{thetaID0} with $i$=$1,2$ and \eqref{thetagrid} yield $\omega_\mathrm{r,1}=\omega_\mathrm{r,2}=\omega_\mathrm{grid} = \omega_\mathrm{nom}$. However, from \eqref{freqID1a} for $i$=$1,2$, $\omega_\mathrm{r,1}=\omega_\mathrm{nom}-n_\mathrm{1}P_\mathrm{1}$ and $\omega_\mathrm{r,2}=\omega_\mathrm{nom}-n_\mathrm{2}P_\mathrm{2}$. Hence, this concludes that $n_\mathrm{1}P_\mathrm{1}=n_\mathrm{2}P_\mathrm{2}=0$ which implies $P_\mathrm{1}=P_\mathrm{2}=0$.\newline
Similarly, \eqref{psi} for $i$=$1,2$ yields $V_\mathrm{r,1}=V_\mathrm{nom}-m_\mathrm{int,1}\psi_\mathrm{1}^\mathrm{Q}$ and $V_\mathrm{r,2}=V_\mathrm{nom}-m_\mathrm{int,2}\psi_\mathrm{2}^\mathrm{Q}$. Moreover, from \eqref{voltID1a} for $i$=$1,2$ it follows that $m_\mathrm{1}Q_\mathrm{1}=k_\mathrm{m,1}(V_\mathrm{nom}-V_\mathrm{r,1}-m_\mathrm{int,1}\psi_\mathrm{1}^\mathrm{Q})$ and $m_\mathrm{2}Q_\mathrm{2}=k_\mathrm{m,2}(V_\mathrm{nom}-V_\mathrm{r,2}-m_\mathrm{int,2}\psi_\mathrm{2}^\mathrm{Q})$. As a result, $m_\mathrm{1}Q_\mathrm{1}=m_\mathrm{2}Q_\mathrm{2}=0$ which implies $Q_\mathrm{1}=Q_\mathrm{2}=0$. Hence, active and reactive power of both the VSIs are zero and grid alone will supply the total load requirements of the CI.
\end{proof}
\subsection*{Appendix~B: [Time-scale Separation in \eqref{linearongrid} and \eqref{linearoffgrid}]}
The linearized system equation, given in \eqref{linearongrid} or \eqref{linearoffgrid}, can be written in \textit{standard singularly perturbed form} as follows:
\begin{align}\label{genlin}
 \begin{bmatrix}
  \dot{\underline{x}} \\ \epsilon\dot{\underline{z}} \end{bmatrix}
  =
  \begin{bmatrix}\mathbf{A}_\mathrm{xx} & \mathbf{A}_\mathrm{xz} \\ \mathbf{A}_\mathrm{zx} & \mathbf{A}_\mathrm{zz}
 \end{bmatrix}
  \begin{bmatrix}
  {\underline{x}} \\ {\underline{z}} \end{bmatrix},
\end{align}
where, $\underline{x}$ denotes the slow dynamic variables and $\underline{z}$ denotes the fast dynamic variables. For system of \eqref{linearongrid}, $\underline{x}=\Delta \underline{x}_\mathrm{on}^\mathrm{s}$ and $\underline{z}=\Delta \underline{x}_\mathrm{on}^\mathrm{f}$ and as a result, $\mathbf{A}_\mathrm{xx}\in \mathbb{R}^{9\times 9}$, $\mathbf{A}_\mathrm{xz}\in \mathbb{R}^{9\times 6}$, $\mathbf{A}_\mathrm{zx}\in \mathbb{R}^{6\times 9}$ and $\mathbf{A}_\mathrm{zz}\in \mathbb{R}^{6\times 6}$. Similarly, for system of \eqref{linearoffgrid}, $\underline{x}=\Delta \underline{x}_\mathrm{off}^\mathrm{s}$ and $\underline{z}=\Delta \underline{x}_\mathrm{off}^\mathrm{f}$ and as a result, $\mathbf{A}_\mathrm{xx}\in \mathbb{R}^{6\times 6}$, $\mathbf{A}_\mathrm{xz}\in \mathbb{R}^{6\times 4}$, $\mathbf{A}_\mathrm{zx}\in \mathbb{R}^{4\times 6}$ and $\mathbf{A}_\mathrm{zz}\in \mathbb{R}^{4\times 4}$. $\epsilon$'s are mostly dictated by the values of $L/R$ ratio of the lines which is small for the usual range of $X/R$ of distribution lines \cite{kundur}. Hence the perturbation parameters, $\epsilon$, are very small ($0 < \epsilon \ll 1$). It should be now obvious that the dynamic behavior of system is dictated by two timescales $t$, $(t/\epsilon)$ with the small parameter $\epsilon$ giving rise to these timescales. With non-singular $\mathbf{A}_\mathrm{zz}$ for both the systems, as $\epsilon \to 0$, the quasi-steady-state (QSS) solution of states, $\underline{z}$, obtained from \eqref{genlin} is
\begin{align}
    \underline{z}^\mathrm{ss} = -\mathbf{A}_\mathrm{zz}^{-1}\mathbf{A}_\mathrm{zx}\underline{x}= -\mathbf{A}^{'}_\mathrm{zx}\underline{x}.
\end{align}
Substitution of $\underline{z}$ in \eqref{genlin} by the QSS value, provides the reduced-order model as follows:
\begin{align}
    \dot{\underline{x}} = [\mathbf{A}_\mathrm{xx}-\mathbf{A}_\mathrm{xz}\mathbf{A}^{'}_\mathrm{zx}]\underline{x} = \mathbf{A}^{'}_\mathrm{xx}\underline{x}.
\end{align}
For system of \eqref{linearongrid}, $\mathbf{A}^{'}_\mathrm{xx}=\mathbf{A}^\mathrm{s}_\mathrm{on}$ of \eqref{redlinearongrid} and for system of \eqref{linearoffgrid}, $\mathbf{A}^{'}_\mathrm{xx}=\mathbf{A}^\mathrm{s}_\mathrm{off}$ of \eqref{redlinearoffgrid}.
\ifCLASSOPTIONcaptionsoff
  \newpage
\fi
%\bibliographystyle{IEEEtran}
%\bibliography{biblio}

\end{document}